\newcommand{\longversion}[1]{#1}
\newcommand{\shortversion}[1]{}
\newcommand{\mc}{\mathcal}
\newtheorem{fact}{Fact}
\newcommand{\comment}[1]{ }
\newcommand{\scl}{0.67}
\newcommand{\Oh}{{\mc{O}}}
\renewcommand{\int}{\operatorname{int}}       % interior
\newcommand{\FPT}{\mbox{$\mc{FPT}$}}
\newcommand{\NP}{\mbox{$\mc{NP}$}}
\newcommand{\emphprob}[3]
{\begin{center}\begin{boxedminipage}{0.95 \textwidth}\noindent {{\bf {\sc  #1}} \\
{\bf Given:} #2.\\
{\bf Task:} #3.}\end{boxedminipage}\end{center}\medskip}
\newcommand{\redrule}[2]{\vspace*{1ex}{\bf #1}: #2}
\newcommand{\mist}{\textsc{MIST}\xspace}
\newcommand{\rt}{1.8669}
\newcommand{\rtk}{3.4854}
\newcommand{\rtki}{2.7321}
\newcommand{\rtkii}{2.1364}
\begin{document}

\title{Exact and Parameterized Algorithms for\\ {\sc Max Internal Spanning Tree}%
\thanks{\longversion{This work was partially s}\shortversion{S}upported by a PPP grant between DAAD (Germany) and NFR (Norway).}}% on Cubic  Graphs}
\author{
 Henning Fernau\inst1,
 Serge Gaspers\inst2,
 Daniel Raible\inst1      
% Alexey A. Stepanov\inst3
}
\institute{%
Univ. %ersit\"at 
Trier, FB 4---Abteilung Informatik, %\\
D-54286 Trier,
Germany\\
\email{\{fernau,raible\}@uni-trier.de}
\and
LIRMM -- Univ. of Montpellier 2, CNRS, 34392 Montpellier, France\\
\email{gaspers@lirmm.fr}
%\and
%Department of Informatics, Univ. of Bergen, N-5020 Bergen, Norway\\
%\email{ljosha@ljosha.org}
}

\maketitle
\begin{abstract}We consider the \NP-hard problem of finding a spanning tree with a maximum number of internal vertices. This problem is a
generalization of the famous 
 {\sc Hamiltonian Path} problem. Our dynamic-programming algorithms for general and degree-bounded graphs have running times of the form $\Oh^*(c^n)$
($c \le 3$). The main result, however, is a branching algorithm for graphs with maximum degree three. It only needs polynomial space and has a running
time of $\Oh^*(\rt^n)$ when analyzed with respect to the number of vertices. We also show that its running time is $\rtkii^k n^{\Oh(1)}$ when the goal
is to find a spanning tree with at least $k$ internal vertices. Both running time bounds are obtained via a Measure \& Conquer analysis, the latter
one being a novel use of this kind of analyses for parameterized algorithms.
\end{abstract}

\section{Introduction}
\subsubsection*{Motivation}
In this paper we investigate the following problem:
\emphprob{Max Internal Spanning Tree (MIST)}{A graph $G=(V,E)$ with $n$ vertices and $m$ edges}{Find a spanning tree of $G$ with a maximum number of internal vertices}
\mist is a generalization of the \longversion{famous and }well-studied {\sc Hamiltonian Path} problem\shortversion{:}\longversion{. Here, one is asked to} find a path in a graph such that every vertex is visited exactly  once. Clearly, such a path, if it exists, is also a spanning tree, namely one with a maximum number of internal vertices. Whereas the running time barrier of $2^n$ has not been broken for general graphs, there are faster algorithms for cubic graphs (using only polynomial space). It is natural to ask if for the generalization, \mist, this can also be obtained.

A second issue is \longversion{if we can}\shortversion{to} find an algorithm for \mist with a running time of the form $\Oh^*(c^n)$.
\footnote{\longversion{Throughout the paper, we write }$f(n) = \Oh^*(g(n))$ if $f(n) \leq p(n) \cdot g(n)$
for some polynomial $p(n)$.}  
The \longversion{very }na{\"i}ve approach gives only an upper bound of $\Oh^*(2^m)$.
% \longversion{It is quite obvious that t}\shortversion{T}he algorithm for {\sc Hamiltonian path} \longversion{yielding}\shortversion{with} a running time of $\Oh^*(2^n)$ does not work here. It \longversion{makes extensive use of}\shortversion{exploits} the fact that the wanted `spanning tree'  is indeed a path.

A possible application could be the following scenario. \longversion{Suppose you have a set of}\shortversion{Consider} cities which should be connected with water pipes. The possible connections between them can be represented by a graph $G$. It suffices to compute a spanning tree $T$ for $G$.  In $T$ we may have high degree vertices that have to be implemented by branching pipes\longversion{. These branching pipes}\shortversion{ which} cause turbulences and therefore pressure may drop. To minimize the number of branching pipes one  can equivalently compute a spanning tree with the smallest number of leaves, leading to \mist. %Those 
%It is also easy to conceive that the 
Vertices representing 
 branching pipes should not be of  arbitrarily high degree, motivating us to investigate \mist on degree-restricted graphs.

\subsubsection*{Previous Work}
It is well-known that the more restricted problem, \textsc{Hamiltonian Path}, can be solved within $\Oh(n^2 2^n)$ steps and exponential space. This
result has been  independently obtained by Bellman~\cite{Bel62}, and Held and Karp~\cite{HelKar62}. The \textsc{Traveling Salesman} problem is very
closely related to \textsc{Hamiltonian Path}. Basically, the same algorithm solves this problem, but there has not been any improvement on the running
time since 1962. The space requirements have, however, been improved and now there 
are $\Oh^*(2^n)$ algorithms needing only polynomial space. In 1977, Kohn \emph{et al.}~\cite{KohGotKoh77} gave an algorithm based on generating
functions with a running time of $\Oh(2^n n^3)$ and space requirements of $\Oh(n^2)$ and in 1982 Karp~\cite{Kar77} came up with an algorithm which
improved storage requirements to $\Oh(n)$ and preserved this run time by an inclusion-exclusion approach.

Eppstein~\cite{Epp03} studied the {\sc Traveling Salesman }problem on cubic graphs. He could achieve a running time of $\Oh^*(1.260^n)$ using
polynomial space. Iwama and Nakashima~\cite{IwaNak07} could improve this to $\Oh^*(1.251^n)$. %It is possible to use this algorithm as a black box for
solving {\sc Hamiltonian Path} in $\Oh^*(1.251^n)$.
Bj{\"o}rklund \emph{et al.}~\cite{BjoHusKasKoi08} studied TSP with respect to degree-bounded graphs. Their algorithm is a variant of the classical
$2^n$-algorithm and the space requirements are therefore exponential. Nevertheless, they showed that for a graph with maximum degree 
$d$ there is a $\Oh^*((2-\epsilon_d)^n)$-algorithm. In particular for $d=4$ there is a $\Oh(1.8557^n)$- and for $d=5$ a $\Oh(1.9320^n)$-algorithm.

\mist was also studied with respect to parameterized complexity. The (standard) parameterized version of the problem is parameterized by $k$, and asks
whether $G$ 
has a spanning tree with at least $k$ internal vertices. Prieto and Sloper~\cite{PriSlo2003} proved a $\Oh(k^3)$-vertex kernel for the problem showing
\FPT-membership. In \cite{\longversion{Pri2005,}PriSlo2005} the kernel size has been improved to $\Oh(k^2)$ and in \cite{FominGST} to $3k$.
Parameterized algorithms for \mist have been studied in~\cite{CohenFGKSY09,FominGST,PriSlo2005}. Prieto and Sloper~\cite{PriSlo2005} gave the first
FPT algorithm, with running time $2^{4k \log k} \cdot n^{\Oh(1)}$. This result was improved by Cohen {\em et al.}~\cite{CohenFGKSY09} who solve a more
general directed version of the problem in time $49.4^k \cdot n^{\Oh(1)}$. The current fastest algorithm has running time $8^k \cdot
n^{\Oh(1)}$~\cite{FominGST}.

Salamon~\cite{Sal07} studied the problem considering approximation. He could achieve a $\frac{7}{4}$-approximation. A $2(\Delta-2)$-approximation for the node-weighted version is also a by-product. Cubic and claw-free graphs were considered by Salamon and Wiener~\cite{SalWie08}. They introduced algorithms with approximation ratios $\frac{6}{5}$ and $\frac{3}{2}$, respectively.

\subsubsection{Our Results} This paper gives two algorithms:\shortversion{\\}
\longversion{
\begin{enumerate}\itemsep.5pt
 \item[$(a)$]}
\shortversion{(a)} A dynamic-programming algorithm solving \mist in time $\Oh^*(3^n)$. We extend this algorithm and show that for any degree-bounded
graph a running time of $\Oh^*((3-\epsilon)^n)$ with $\epsilon >0$ can be achieved. To our knowledge this is the first algorithm for \mist with a
running time 
bound of the form $\Oh^*(c^n)$.\footnote{Before the camera-ready version of this paper was prepared, Nederlof~\cite{Nederlof09} came up with a
polynomial-space $\Oh^*(2^n)$ algorithm for \mist on general graphs, answering a question in a preliminary version of this paper.}
\shortversion{\\(b)}
 \longversion{\item[$(b)$]} A  \shortversion{polynomial-space }branching algorithm solving the maximum degree $3$ case in time $\Oh^*(\rt^n)$.
\longversion{The space requirements are only polynomial in this case. }We also analyze the same algorithm from a parameterized point of view,
achieving a running time of $\rtkii^k n^{\Oh(1)}$ to find a spanning tree with at least $k$ internal vertices (if
\shortversion{possible}\longversion{the graph admits such a spanning tree}). The latter analysis is novel in a sense that we use a potential function
analysis---Measure \& Conquer---in a way that, to our knowledge, is much less restrictive than any previous analyses for parameterized algorithms that
were based on the potential function method.
\longversion{\end{enumerate}}

\subsubsection{Notions \& Definitions}
We consider only simple undirected graphs $G=(V,E)$. The \emph{neighborhood} of a vertex $v\in V$ in $G$ is $N_G(v):=\{u \mid \{u,v\} \in E\}$ and its \emph{degree} is $d_G(v):=|N_G(v)|$. The \emph{closed neighborhood} of $v$ is $N_G[v]:=N_G(v) \cup \{v\}$ and for a set $V' \subseteq V$ we let $N_G(V'):=\left(\bigcup_{u \in V'}N_G(u)\right) \setminus V'$.
%With respect to $A \subseteq V$ let $N_A(v):=\{s \mid \{v,s\} \in E,  s \in A\}$ and $d_A(v):=|N_A(v)|$. %We omit the subscript in case of $A=V$.
We omit the subscripts of $N_G(\cdot)$, $d_G(\cdot)$, and $N_G[\cdot]$ when $G$ is clear from the context.
%When we consider a graph $G$ and a (partial) spanning tree $T$ of $G$, these notations always refer to $G$ when the subscript is omitted.
A \emph{subcubic graph} has maximum degree at most three. 
For a (partial) spanning tree $T$ let $I(T)$ be the set of its internal (non-leaf) vertices and $L(T)$ the set of its leaves. An \emph{$i$-vertex} $u$ is a vertex with $d_T(u)=i$ with respect to some spanning tree $T$. The \emph{tree-degree} of some $u \in V(T)$ is $d_T(u)$. $T_o$ refers to an arbitrary maximum internal spanning tree. We also speak of the \emph{$T$-degree} $d_T(v)$ when we refer to a specific spanning tree. A \emph{Hamiltonian path} is a sequence of pairwise distinct vertices $v_1,\ldots ,v_n$ from $V$ such that $\{v_i,v_{i+1}\} \in E$ for $1 \le i \le n-1$.

\section{The Problem on General Graphs}
We give a simple dynamic-programming algorithm to solve MIST within $\Oh^*(3^n)$ steps. Here we build up a table $M[I,L]$ with $I,L \subset V$ such that $I \cap L= \emptyset$. The set $I$ represents the internal vertices and $L$ the leaves of some tree with vertex set $I \cup L$ in $G$. If such a tree exists then we have $M[I,L]=1$ and otherwise a zero-entry. \longversion{In the beginning, we initialize
all table-entries with zeros.} 
In the initializing phase we iterate over all $e \in E$ and set $M[\emptyset,e]=1$. 
Note that every edge is a tree with two leaves and no internal vertices. 
To compute further entries we use  dynamic programming in stages $3,\ldots,n$. 
Stage $i$ consists in determining all table entries indexed by all $I,L \subseteq V$ with $|I|+|L|=i$ and $I \cap L = \emptyset$ such that $G[I \cup L]$ is connected and $M[I,L]=1$. We obtain the table entries of stage~$i$ by inspecting the non-zero entries of stage~$(i-1)$. 
\longversion{If $|I|+|L|=i-1$ and $M[I,L]=1$ then for every $x \in N(I \cup L)$ consider any possibility of attaching $x$ as a leaf to the tree formed by $I \cup L$. There are two possibilities:
\begin{enumerate}
 \item[$a)$] $x$ is adjacent to an internal vertex, then set $M[I,L \cup \{x\}]=1$, and
 \item[$b)$] $x$ is adjacent to a leaf $y$ then set $M[I \cup \{y\},(L \setminus \{y\}) \cup \{x\}]=1$.
\end{enumerate}
}Recursively this can be expressed as follows:
\begin{equation}\label{one}
M[I,L]=\left\{
\begin{array}{l@{\,:\,}l}
1 & \exists x \in L \cap N(I): M[I,L \setminus  \{x\}]=1\\
1 & \exists x \in L, y \in N(x) \cap I: M[ I \setminus \{y\},(L \cup \{y\}) \setminus \{x\}]=1 \\
0 & \text{otherwise}
\end{array}
 \right.
\end{equation}
Here we use the fact that, if we delete a leaf $x$ of a tree $T$, then there are two possibilities for the resulting tree $T'$: Either $T'$ has the same internal vertices as $T$ but one leaf less, or the father $y$ of $x$ in $T$ has become a leaf as $d_T(y)=2$. These are \longversion{exactly }the two cases which are considered in Eq.~(\ref{one}). The number of entries in $M$ is at most $\longversion{\sum_{A,B \subseteq V \atop A \cap B = \emptyset} 1 = \sum_{D \subseteq V} \sum_{C \subseteq D}1=}3^{|V|}$.
\begin{lemma}
{\sc Max Internal Spanning Tree} can be solved in time $\Oh^*(3^n)$.
\end{lemma}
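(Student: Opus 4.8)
The plan is to turn the recursive characterization in Eq.~(\ref{one}) into an explicit algorithm and to argue its correctness and running time. First I would set up the table $M[I,L]$ indexed by disjoint subsets $I,L\subseteq V$, initialize all entries to $0$, and then set $M[\emptyset,\{u,v\}]=1$ for every edge $\{u,v\}\in E$; this correctly records all two-vertex trees. Then I would process the table in stages $i=3,\dots,n$: for each non-zero entry $M[I,L]$ with $|I|+|L|=i-1$, and for each vertex $x\in N(I\cup L)$, I would set $M[I,L\cup\{x\}]=1$ if $x$ has a neighbor in $I$, and set $M[I\cup\{y\},(L\setminus\{y\})\cup\{x\}]=1$ if $x$ has a neighbor $y\in L$. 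After stage $n$ is complete, the answer is the largest $|I|$ over all pairs with $M[I,L]=1$ and $I\cup L=V$.

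For correctness I would prove by induction on $i=|I|+|L|$ the invariant that $M[I,L]=1$ exactly when $G[I\cup L]$ contains a spanning tree $T$ of $I\cup L$ with $I(T)=I$ and $L(T)=L$. The base case $i=2$ is immediate since the only trees on two vertices are single edges, with both endpoints leaves. For the inductive step, the forward direction uses the fact that adding a new leaf $x$ adjacent to $I\cup L$ produces a tree whose internal/leaf partition is updated exactly as in the two cases above (attaching to an internal vertex keeps $I$ unchanged; attaching to a leaf $y$ promotes $y$ to internal). For the reverse direction, given such a tree $T$ on $i$ vertices, pick any leaf $x$ of $T$ and delete it; the resulting tree $T'$ on $i-1$ vertices has either the same internal set (if $d_T(\text{father of }x)\ge 3$) or internal set $I\setminus\{y\}$ where $y$ is the father of $x$ with $d_T(y)=2$ — precisely the two preimages considered, so $M$ was set to $1$ at the corresponding earlier entry and hence propagated. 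This is exactly the content of Eq.~(\ref{one}).

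For the running time, the number of valid index pairs $(I,L)$ with $I\cap L=\emptyset$ is $\sum_{D\subseteq V}\sum_{C\subseteq D}1=3^n$, since choosing such a pair is equivalent to assigning each vertex one of three states (in $I$, in $L$, or in neither). Each entry is touched a polynomial number of times (when it is created we iterate over $O(n)$ choices of $x$ and check adjacency in polynomial time), so the total running time is $\Oh^*(3^n)$, which proves the lemma. Exponential space is used, but no better is claimed here.

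I do not expect a genuine obstacle: the only point requiring a little care is the reverse direction of the inductive invariant, namely verifying that \emph{every} tree on $i$ vertices with the prescribed leaf/internal partition arises from \emph{some} tree on $i-1$ vertices via one of the two update rules — this is handled by the leaf-deletion argument and the elementary observation that a tree with at least two vertices always has a leaf, and removing it either leaves its neighbor's tree-degree $\ge 2$ (still internal, unless that neighbor had tree-degree exactly $2$, in which case it becomes a leaf). Ensuring the stages are processed in increasing order of $i$ so that all needed predecessor entries are already computed is the only implementation subtlety.
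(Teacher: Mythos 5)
Your proposal is correct and follows essentially the same route as the paper: the same table $M[I,L]$ over disjoint vertex subsets, the same initialization on edges, the same stage-wise recursion of Eq.~(\ref{one}) justified by the leaf-deletion argument, and the same $3^n$ count of index pairs. The extra detail you supply (the explicit inductive invariant and the check that the two leaf-deletion cases are exhaustive for $i\ge 3$) is exactly the content the paper sketches more briefly.
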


\subsubsection{Bounded Degree}
In this paper, we are particularly interested in solving MIST on graphs of bounded degree. The next lemma is due to \cite{BjoHusKasKoi08}.
\begin{lemma}\label{noconnect}
An $n$-vertex graph with maximum vertex degree $\Delta$ has at most $\beta_\Delta^n+n$ connected vertex sets with $\beta_\Delta=(2^{\Delta+1}-1)^{\frac{1}{\Delta+1}} $.
\end{lemma}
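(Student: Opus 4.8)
The plan is to first reduce to the case that $G$ is connected. If $G$ has components $G_1,\dots,G_t$ of orders $n_1,\dots,n_t$, then every connected vertex set of $G$ is a connected vertex set of exactly one $G_i$, so the quantity to bound is $\sum_i c(G_i)$, writing $c(\cdot)$ for ``number of (nonempty) connected vertex sets''. A component on a single vertex contributes $1$; collecting all such contributions into the additive term, and using that $\beta_\Delta^{a}+\beta_\Delta^{b}\le\beta_\Delta^{a+b}$ whenever $a,b\ge2$ (because then $\beta_\Delta^{-a}+\beta_\Delta^{-b}\le\tfrac12+\tfrac12=1$, as $\beta_\Delta\ge\beta_2=7^{1/3}>\sqrt2$), it suffices to prove $c(G)\le\beta_\Delta^{|V(G)|}$ for every connected $G$ of order at least $2$ — the leftover isolated vertices are what the ``$+n$'' pays for.

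So let $G$ be connected and fix a \emph{depth-first} rooted spanning tree $T$ of $G$, so that every non-tree edge of $G$ joins a vertex to one of its ancestors in $T$. I would partition $V$ into connected ``clusters'' $C_1,\dots,C_p$, each inducing a subtree of $T$, with $|C_j|\le\Delta+1$ for all $j$ and $|C_j|\ge2$ for every non-root cluster. Such a partition exists: iteratively delete from the current tree a subtree $T_v$ of maximum depth subject to $|T_v|\ge2$; by maximality every child of $v$ in $T_v$ is a leaf, so $T_v$ is a star and $|T_v|=1+d_{T_v}(v)\le\Delta+1$. Since the clusters induce subtrees of a tree, contracting each cluster yields a tree $\mathcal T$ in which any two clusters are joined by at most one edge of $T$; root $\mathcal T$ at $C_1$ and, for each non-root cluster $C$, let $p_C\in C$ be the endpoint in $C$ of the unique $T$-edge going to the parent cluster (the \emph{portal} of $C$).

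The heart of the argument is to count the connected vertex sets of $G$ cluster by cluster. For a nonempty connected $S$ one wants to show, using that $T$ is a DFS tree: $S$ is contained in the clusters of a single rooted $\mathcal T$-subtree, below a well-defined topmost cluster $C_0=C_0(S)$; the trace $S\cap C_0$ is an arbitrary nonempty connected subset of $G[C_0]$ ($\le2^{|C_0|}-1$ possibilities); every other cluster $C$ met by $S$ must have $p_C\in S$ ($\le2^{|C|-1}$ possibilities for $S\cap C$); and clusters outside this subtree contribute only $\es$. This yields
\[
 c(G)\ \le\ \sum_{C_0}\Big(2^{|C_0|}-1\Big)\prod_{C\ \text{strictly below}\ C_0\ \text{in}\ \mathcal T}\big(1+2^{|C|-1}\big).
\]
Feeding in the elementary inequality $2^{s}-1\le\beta_\Delta^{s}$ for $1\le s\le\Delta+1$ — equivalently $\big(2^{\Delta+1}-1\big)^{s}\ge\big(2^{s}-1\big)^{\Delta+1}$, which is just monotonicity of $x\mapsto x^{-1}\ln(1-2^{-x})$ — together with $1+2^{s-1}\le2^{s}-1\le\beta_\Delta^{s}$ for $s\ge2$, the $C_0=C_1$ term is at most $\beta_\Delta^{\sum_j|C_j|}=\beta_\Delta^{n}$, and each remaining term is at most $\beta_\Delta^{m}$ where $m<n$ is the number of vertices in the $\mathcal T$-subtree it indexes; because $1+2^{\Delta}<2^{\Delta+1}-1$ for $\Delta\ge2$, these contributions decay geometrically down the tree, and summing them carefully gives the target bound.

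The step I expect to fight with is the ``cluster by cluster'' counting. For a connected set of $T$ (a subtree) the claims above are clear, but a connected set of $G$ may use non-tree edges, and a priori such an edge could let $S$ ``skip over'' a cluster or reach a cluster's vertices without passing through its portal, breaking both ``$S$ lies below one top cluster'' and ``$p_C\in S$''. This is exactly why $T$ should be a DFS tree (non-tree edges go only to ancestors), and even then the bookkeeping needs care; what rescues it is that inside a cluster of $\le\Delta+1$ vertices one can afford to count \emph{all} $2^{|C|}-1$ connected subsets, so only the skipping has to be controlled. The second, more computational, hurdle is to carry out the summation over the choice of topmost cluster tightly enough that one lands on $\beta_\Delta^{n}+n$ rather than merely a constant multiple of $\beta_\Delta^{n}$; the inequality $1+2^{\Delta}<2^{\Delta+1}-1$, and hence the hypothesis $\Delta\ge2$, is what makes this possible, while the case $\Delta=1$ ($G$ a matching) is settled by direct inspection.
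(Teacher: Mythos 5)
A preliminary remark: the paper does not actually prove this lemma; it is imported from Bj\"orklund, Husfeldt, Kaski and Koivisto~\cite{BjoHusKasKoi08}, whose argument is entirely different from yours. They observe that a connected set $S$ with $|S|\ge 2$ satisfies $S\cap N[v]\neq\{v\}$ for \emph{every} vertex $v$, so the trace $S\cap N[v]$ has at most $2^{d(v)+1}-1$ admissible values; padding the cover $\{N[v]\}_{v\in V}$ with singletons so that every vertex is covered exactly $\Delta+1$ times and applying Shearer's lemma gives $|\mathcal{C}|^{\Delta+1}\le\prod_{v}\bigl(2^{d(v)+1}-1\bigr)2^{\Delta-d(v)}\le\bigl(2^{\Delta+1}-1\bigr)^{n}$. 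Your proof therefore has to stand on its own, and it does not.

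The fatal step is the one you flagged yourself: the claim that every non-topmost cluster $C$ met by $S$ must contain its portal $p_C$. This is false, and the DFS property is the cause of the failure, not the cure. Take $V=\{r,a,b,c\}$ with edges $\{r,a\},\{a,b\},\{b,c\},\{a,c\}$. A DFS from $r$ produces the tree $r\!-\!a\!-\!x\!-\!y$ with $\{x,y\}=\{b,c\}$ and a back edge from $y$ to $a$; your extraction rule yields the clusters $\{x,y\}$ (portal $x$) and $\{r,a\}$, and the connected set $S=\{a,y\}$ meets $\{x,y\}$ without containing $x$. In general, a back edge from a descendant of $u\in C$ up to $u$ lets $S$ enter $C$ ``from below'', bypassing $p_C$ entirely --- exactly the edges a DFS tree guarantees to exist as non-tree edges. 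Nor can this be patched by being generous inside clusters: the factor $2^{|C|-1}$ per non-topmost cluster is where the entire saving lives, since replacing it by the honest $2^{|C|}$ makes the root term alone roughly $2^{n}$, which exceeds $\beta_\Delta^{n}$ because $\beta_\Delta<2$. Independently, the closing summation is only gestured at: even granting all the counting claims, bounding each term by $\beta_\Delta^{m(C_0)}$ and summing over $C_0$ gives about $\beta_\Delta^{n}/(1-\beta_\Delta^{-2})$, and the lemma leaves you no constant factor to absorb this; you would have to carry the per-cluster slack $\bigl(1+2^{|C|-1}\bigr)/\beta_\Delta^{|C|}<1$ through the whole sum, which you have not done. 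The elementary ingredients you use ($2^{s}-1\le\beta_\Delta^{s}$ for $s\le\Delta+1$, the reduction to connected graphs) are fine, but the core counting argument is broken.
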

In particular, $n$ refers to the connected sets of size one, which is $\{\{x\}\mid x\in V\}$. 
Thus, the number of all connected sets of size greater than one is $\beta_\Delta^n$.  
Using this %lemma 
we %will 
prove: % in the rest of the section the following:
\begin{lemma}\label{3eps}
For any $n$-vertex graph with maximum degree $\Delta$ there is an algorithm that solves MIST in time $\Oh^*(3^{(1-\epsilon_\Delta)n})$ with $\epsilon_\Delta >0$.
\end{lemma}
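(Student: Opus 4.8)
The plan is to reuse the dynamic-programming algorithm from the previous subsection \emph{verbatim}, and only to analyse its running time more carefully on a graph of maximum degree $\Delta$. First I would record the invariant that every nonzero entry $M[I,L]=1$ the algorithm ever creates has $G[I\cup L]$ connected: the recurrence only attaches a vertex $x$ that is adjacent to the current vertex set $I\cup L$, so connectedness propagates from the base case (single edges) onward. Since each stage performs only polynomial work per stored $1$-entry, the total running time is $n^{\Oh(1)}$ times the number of $1$-entries, and it suffices to bound that number by $(3-\epsilon_\Delta)^n$ for some $\epsilon_\Delta>0$.

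The heart of the argument is a sharper count of the $1$-entries supported on a fixed connected set $D:=I\cup L$. Such an entry is determined by the pair $(D,I)$, and I claim that whenever $M[I,L]=1$ the set $I$ induces a connected subgraph of $G[D]$, with the sole exception $I=\es$, which occurs only when $D$ is the vertex set of an edge. Indeed, $I$ is exactly the set of non-leaf vertices of some spanning tree $T$ of $G[D]$, and in any tree the non-leaves induce a connected subgraph: the tree-path joining two non-leaves has all of its interior vertices of tree-degree at least $2$, hence non-leaves as well, so the whole path stays inside $I$. Therefore the number of $1$-entries with vertex set $D$ is at most $1$ plus the number of connected vertex subsets of $G[D]$; as $G[D]$ has at most $|D|$ vertices and maximum degree at most $\Delta$, Lemma~\ref{noconnect} bounds this by $\beta_\Delta^{|D|}+|D|+1$.

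Summing over all connected sets $D$, and then crudely over all vertex subsets of $V$, the number of $1$-entries is at most $\sum_{D\subseteq V}\bigl(\beta_\Delta^{|D|}+|D|+1\bigr)=\sum_{i=0}^{n}\binom{n}{i}\beta_\Delta^{i}+n2^{n-1}+2^{n}=(1+\beta_\Delta)^{n}+\Oh^*(2^{n})$. Since $\beta_\Delta=(2^{\Delta+1}-1)^{1/(\Delta+1)}<2$, we get $1+\beta_\Delta<3$, so setting $\epsilon_\Delta:=1-\log_3(1+\beta_\Delta)\in(0,1)$ yields total running time $\Oh^*\bigl((1+\beta_\Delta)^{n}\bigr)=\Oh^*\bigl(3^{(1-\epsilon_\Delta)n}\bigr)$, as desired; for instance $\epsilon_3=1-\log_3(1+15^{1/4})>0$.

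I expect the only real obstacle to be finding this per-set bound: the naive estimate of $2^{|D|}$ internal/leaf partitions per connected set merely reproduces the $3^n$ bound, so everything hinges on the observation that the internal vertices of a spanning tree induce a connected subgraph, which permits a second application of Lemma~\ref{noconnect} at the level of $G[D]$. Once that is in place, the collapse $\sum_i\binom{n}{i}\beta_\Delta^{i}=(1+\beta_\Delta)^{n}$ and the strict inequality $\beta_\Delta<2$ finish the proof; and the fact that $\beta_\Delta\to 2$ as $\Delta\to\infty$ is exactly why the statement can only promise a positive, degree-dependent $\epsilon_\Delta$.
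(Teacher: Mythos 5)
Your proof is correct and follows essentially the same route as the paper: both arguments count the nonzero entries $M[I,L]$ by observing that $D=I\cup L$ is connected and that $I$ is itself a connected subset of $G[D]$, then apply Lemma~\ref{noconnect} to $G[D]$ to get $\sum_{D}\beta_\Delta^{|D|}\le(1+\beta_\Delta)^n<3^n$. You even make explicit a step the paper leaves implicit, namely that the internal vertices of a tree induce a connected subgraph.
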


\begin{proof}
As Lemma~\ref{noconnect} bounds the number of connected subsets of $V$, we would like to skip unconnected ones. This is guaranteed by the approach of dynamic programming in stages.
Let $\cal C$ consist of the sets $F \subseteq V$ such that $G[F]$ is connected and $|F|\ge2$. Then the number of visited entries of $M[I,L]$ with $|I|\ge 2$ in all stages is at most
\begin{equation*}
\sum_{A \subseteq V \atop A \in {\cal C}}\sum_{I \subseteq A \atop I \in {\cal C}} 1\le \sum_{A \subseteq V \atop A \in {\cal C}} \beta_\Delta^{|A|} \le \sum_{i=0}^n \left(n \atop i\right) \beta_\Delta^i =(\beta_\Delta+1)^n
\end{equation*}

\begin{table}[tb]\centering
\begin{tabular}{c c c c c c c}\hline
$\Delta$ & 3 & 4 & 5 & 6 & 7 & 8 \\ \hline
Running Time & 2.9680 & 2.9874 & 2.9948 & 2.9978 & 2.9991 & 2.9996\\ \hline
\end{tabular}
\caption{\label{tab1}Running times for graphs with maximum degree  $\Delta$.}
\end{table}

The visited entries $M[I,L]$ where $|I|=1$ is $n$. As $\beta_\Delta<2$ for any\longversion{ constant}~$\Delta$, this shows Lemma~\ref{3eps}. Table~\ref{tab1} gives an overview on the running times for small\longversion{ values of}~$\Delta$.
\qed
\end{proof}

A na{\"i}ve approach to solve the degree restricted version of MIST is to consider each edge-subset. The running time is $\Oh^*(2^{\frac{\Delta}{2}n})$ where $\Delta$ is the maximum degree.
Compared to Table~\ref{tab1}, \longversion{we see that }for every $\Delta \ge 4$, the na{\"i}ve algorithm is slower. A further slight improvement for
$\Delta=3$ provides the next observation. The line graph $G_l$ of $G$ has maximum degree four and hence there are no more than $\beta_4^{|V(G_l)|}$
connected vertex subsets. Clearly, $G$ then has no more than $\beta_4^{|E(G)|}$ connected edge subsets. Having already a partial connected solution
$T_E \subseteq E$ we only branch on edges $\{u,v\}$ with $u \in T_E$ and $v \not \in T_E$. 
Thus, the run time is $\Oh^*(\beta_4^{\frac{3}{2}n}) = \Oh(2.8017^n)$. We can easily generalize this for arbitrary degree $\Delta$ to
$\Oh^*(\beta_{2\Delta-2}^{\frac{\Delta}{2}n})$.

\section{Subcubic Maximum Internal Spanning Tree}
\subsection{Observations}
Let $t^T_i$ denote the number of vertices $u$ such that $d_T(u)=i$ for a spanning tree $T$. Then the following proposition can be proved
by induction on the number of vertices.
\begin{proposition}\label{maxt2}
In any spanning tree $T$, $2+\sum_{i \ge 3}(i-2)\cdot t_i^T=t_1^T$.
\end{proposition}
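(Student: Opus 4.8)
The plan is to induct on $n = |V(T)|$. The base case $n = 1$ is degenerate (a single vertex, no edges); the first meaningful case is $n = 2$, where $T$ is a single edge with $t_1^T = 2$ and $t_i^T = 0$ for $i \ge 3$, so the identity reads $2 + 0 = 2$, which holds. For the inductive step, take a spanning tree $T$ on $n \ge 2$ vertices, pick a leaf $x$ of $T$, let $y$ be its unique neighbor, and let $T' := T - x$, a tree on $n-1$ vertices. I would apply the induction hypothesis to $T'$ and track how the degree sequence changes when $x$ is re-attached.

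The key step is the case analysis on $d := d_T(y)$, i.e. the degree of $y$ in the larger tree $T$ (so $d_{T'}(y) = d - 1$). There are three cases. First, if $d = 1$, then $T$ is the single edge $\{x,y\}$ and $n = 2$, already handled as the base case. Second, if $d = 2$, then in $T'$ the vertex $y$ is a leaf, so passing from $T'$ to $T$ we lose one leaf ($y$) and gain one leaf ($x$): $t_1$ is unchanged, and no $t_i$ with $i \ge 3$ changes, so the left-hand and right-hand sides of the identity are both preserved. Third, if $d \ge 3$, then $y$ was already internal in $T'$ with $d_{T'}(y) = d-1 \ge 2$; re-attaching $x$ adds the new leaf $x$ (so $t_1$ increases by $1$) and moves $y$ from a $(d-1)$-vertex to a $d$-vertex. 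The net change to $\sum_{i \ge 3}(i-2) t_i^T$ is exactly: plus $(d-2)$ for $y$ now counting as a $d$-vertex, minus $(d-3)$ for $y$ no longer counting as a $(d-1)$-vertex (this term is $0$ when $d = 3$, consistent with $y$ only becoming internal-of-degree-$\ge 3$ at this step) — a net increase of $1$ in $2 + \sum_{i \ge 3}(i-2) t_i^T$, matching the increase of $1$ in $t_1^T$. Hence the identity is preserved in every case, completing the induction.

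The only mildly delicate point is bookkeeping the $d = 3$ subcase, where $y$ transitions from degree $2$ (a leaf of neither type, i.e. contributing nothing to $\sum_{i\ge3}(i-2)t_i$ and not counted in $t_1$) to degree $3$; one must check that $2 + \sum_{i\ge3}(i-2)t_i$ still goes up by exactly $1$, which it does since the $t_3$ term contributes $(3-2)\cdot 1 = 1$. No genuine obstacle arises — the statement is really just a rewriting of the handshaking identity $\sum_v d_T(v) = 2(n-1)$ together with $\sum_v 1 = n$ and $t_2^T = n - t_1^T - \sum_{i\ge3} t_i^T$, and indeed an alternative one-line proof is to substitute these into $\sum_i i\, t_i^T = 2n - 2$ and simplify; I would mention this as a remark but present the induction as the main argument since the proposition explicitly asks for it.
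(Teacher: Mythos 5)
Your proof is correct and follows exactly the route the paper indicates: the paper gives no written proof but states that the proposition ``can be proved by induction on the number of vertices,'' which is precisely your leaf-removal induction, and your case analysis on $d_T(y)$ (including the $d=3$ bookkeeping) is sound. The handshaking-lemma remark you add is a valid one-line alternative, but it does not change the assessment.
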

% \begin{proof}
% An induction on the number of vertices shows the claim.
% \end{proof}
Due to Proposition~\ref{maxt2}, \mist on subcubic graphs boils down to finding a spanning tree $T$ such that $t_2^T$ is maximum. Every internal vertex
of higher degree would also introduce additonal leaves.
\begin{lemma}{\cite{PriSlo2003}}\label{ind} An optimal solution
 $T_o$ to {\sc Max Internal Spanning Tree} is a Hamiltonian path or the leaves of $T_o$ are independent.
\end{lemma}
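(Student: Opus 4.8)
The plan is to argue by contradiction: assume $T_o$ is an optimal solution that is not a Hamiltonian path, and suppose two leaves $x$ and $y$ of $T_o$ are adjacent in $G$. I want to derive either a contradiction to the maximality of $|I(T_o)|$ or show that some other optimal tree has independent leaves, so that the statement holds for at least one optimal solution. Adding the edge $\{x,y\}$ to $T_o$ creates a unique cycle $C$; since $x$ and $y$ are leaves (degree $1$ in $T_o$), this cycle passes through $x$, $y$, and at least one further vertex. The idea is to pick an edge $e$ on $C$ incident to a vertex that is internal in $T_o$ and whose removal turns it into a leaf only if we are not careful — so instead I remove an edge $e=\{a,b\}$ on $C$ chosen so that $T' := (T_o \setminus e) \cup \{x,y\}$ is again a spanning tree with $|I(T')| \ge |I(T_o)|$.

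The key accounting step: in passing from $T_o$ to $T'$, the degrees of $x$ and $y$ go from $1$ to $2$ (so both become internal, a gain of $2$ internal vertices unless they were somehow already counted — they were leaves, so this is a genuine gain of $2$), while the degrees of $a$ and $b$ each drop by $1$. The only way the move can fail to increase $|I|$ is if both $a$ and $b$ are $2$-vertices of $T_o$ that become leaves in $T'$, i.e.\ we lose $2$ internal vertices, exactly cancelling the gain. So the first step is: if $C$ contains an edge $e=\{a,b\}$ with $d_{T_o}(a) \ge 3$ or $d_{T_o}(b) \ge 3$, removing it strictly increases the number of internal vertices, contradicting optimality. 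Hence every vertex of $C$ other than $x,y$ has $T_o$-degree exactly $2$. But then $C$ together with the two tree-paths hanging off nothing — more precisely, since $x$ and $y$ are leaves, the path in $T_o$ between $x$ and $y$ consists entirely of $2$-vertices, so $T_o$ restricted to that path is an induced path, and removing any internal edge $e$ of it and adding $\{x,y\}$ yields a tree $T'$ with $|I(T')| = |I(T_o)|$: still optimal. Iterating this re-rooting argument (each step either makes progress toward a Hamiltonian path or keeps optimality while reducing, say, the number of adjacent leaf pairs), one reaches an optimal tree that is either a Hamiltonian path or has an independent leaf set.

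The main obstacle I anticipate is making the iteration terminate cleanly: after swapping $\{x,y\}$ in and $e$ out, the two former internal endpoints of $e$ may become new leaves that are themselves adjacent to other leaves, so a crude induction on the number of adjacent leaf pairs need not be monotone. The fix is to choose the potential carefully — for instance, induct on the pair $(\,|L(T_o)|,\ \text{something})$ under the invariant that $|I|$ never decreases, using Proposition~\ref{maxt2} to control how leaves can reappear — or, more simply, to observe that among all optimal trees we may take one with the fewest leaves; if that tree still had two adjacent leaves, the swap above would produce an optimal tree with strictly fewer leaves (since $x,y$ become internal and at most the single removed edge's endpoints, but one can show not both, become leaves once we also use that the path between $x$ and $y$ has all-degree-$2$ interior, forcing at least a net decrease), a contradiction. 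That extremal-choice version sidesteps the termination issue entirely, so that is the route I would write up.
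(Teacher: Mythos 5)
The first half of your argument is the right one, and it is essentially the Prieto--Sloper swap that the paper alludes to when it says the proof shows that two adjacent leaves in a non-Hamiltonian-path $T_o$ let you increase the number of internal vertices: add $\{x,y\}$ for adjacent leaves $x,y$, look at the unique cycle $C$, and check that deleting an edge of $C$ having an endpoint of $T_o$-degree at least $3$ strictly increases $|I(T_o)|$ (your case accounting for this, including edges of $C$ incident to $x$ or $y$, is correct). So by optimality every vertex of the tree path from $x$ to $y$ other than $x,y$ has $T_o$-degree exactly $2$. The gap is that you stop one sentence short of the conclusion and replace it with an iteration. At that point no vertex of the $x$--$y$ path has a tree edge leaving the path (the interior vertices have both their tree edges on the path, the endpoints are leaves); since $T_o$ is a connected spanning tree, the path is all of $T_o$, i.e.\ $T_o$ is a Hamiltonian path --- the first alternative of the lemma. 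There is nothing to iterate, no termination issue, and no need to pass to a different optimal tree.

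Moreover, the fallback you propose to repair the iteration is false in exactly the case where you would invoke it. If all interior vertices of the path have degree exactly $2$, then removing any interior edge $\{v_j,v_{j+1}\}$ and adding $\{x,y\}$ turns \emph{both} $v_j$ and $v_{j+1}$ into leaves while $x$ and $y$ become internal, so $|L(T')|=|L(T_o)|$; your parenthetical claim that ``one can show not both \dots become leaves'' does not hold, and the ``fewest leaves among optimal trees'' induction makes no progress. Note also that the lemma is asserted for \emph{every} optimal $T_o$, so your opening hedge (``show that some other optimal tree has independent leaves'') would only establish a weaker existential statement; the direct argument above gives the stated one. A minor loose end: dispose of the degenerate case $\{x,y\}\in E(T_o)$ separately (then $T_o$ is the single edge $\{x,y\}$, a Hamiltonian path on two vertices), so that $C$ is well defined.
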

The proof of Lemma~\ref{ind} shows that if $T_o$ is not a Hamiltonian path and there are two adjacent leaves, then the number of internal vertices can be increased.
In the rest of the paper we assume that $T_o$ is not a Hamiltonian path due to the next lemma. %This can be achieved by using Eppsteins's algorithm the following way. For every vertex pair $u,v \in V$ look for an Hamiltonian cycle in $G_{u,v}=G(V,E\cup \{u,v\})$. This is justified by the next Lemma.

\shortversion{Using the $\Oh^*(1.251^n)$ algorithm for \textsc{Hamiltonian Cycle} \cite{IwaNak07} we can easily prove the following.}

\begin{lemma}
\textsc{Hamiltonian Path} can be solved in time $\Oh^*(1.251^n)$ on subcubic graphs.
\end{lemma}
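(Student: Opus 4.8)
The plan is to reduce \textsc{Hamiltonian Path} on a subcubic graph $G=(V,E)$ to polynomially many instances of \textsc{Hamiltonian Cycle} on subcubic (indeed cubic) graphs, and then apply the $\Oh^*(1.251^n)$ algorithm of~\cite{IwaNak07}. A Hamiltonian path $P$, when it exists, is pinned down by its two endpoints $u\ne v$ together with the single edge $e_u$ incident to $u$ and the single edge $e_v$ incident to $v$ that $P$ uses. So first I would branch over all $\Oh(n^2)$ unordered pairs $\{u,v\}$ and, for each, over the at most $d(u)\cdot d(v)\le 9$ choices of $(e_u,e_v)$; this yields $n^{\Oh(1)}$ sub-instances, and $G$ has a Hamiltonian path iff at least one of them is a \YES-instance (the cases $n\le 2$ being trivial).

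For a fixed guess $(u,e_u,v,e_v)$ with $n\ge 3$, let $G'$ arise from $G$ by deleting every edge incident to $u$ other than $e_u$ and every edge incident to $v$ other than $e_v$, so that $d_{G'}(u)=d_{G'}(v)=1$; discard the guess if $e_u=\{u,v\}$ (which forces $e_v=\{u,v\}$ and hence $n=2$). Now add the edge $\{u,v\}$ to $G'$ to get a simple subcubic graph $G''$ in which $u$ and $v$ have degree two. The correctness claim is: $G$ has a Hamiltonian path with endpoints $u,v$ using $e_u$ and $e_v$ if and only if $G''$ has a Hamiltonian cycle. Indeed, appending $\{u,v\}$ to such a path gives a Hamiltonian cycle of $G''$; conversely any Hamiltonian cycle of $G''$ is forced to use both edges at the degree-two vertex $u$, in particular $\{u,v\}$, and removing $\{u,v\}$ from it leaves a Hamiltonian path of $G''-\{u,v\}=G'\subseteq G$ with endpoints $u,v$ and end-edges $e_u,e_v$.

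It then remains to solve \textsc{Hamiltonian Cycle} on each subcubic $G''$. If one wishes to invoke the cubic-graph statement of~\cite{IwaNak07} verbatim, apply the standard preprocessing first: a vertex of degree at most one forbids a Hamiltonian cycle, and a degree-two vertex $w$ with neighbours $a,b$ may be \emph{smoothed} (delete $w$, add the edge $\{a,b\}$), which preserves Hamiltonicity and strictly decreases the vertex count; iterating, and dealing with the occasional parallel edge in the obvious way, produces a cubic graph on at most $n$ vertices. Running the $\Oh^*(1.251^n)$ algorithm on it, over all $n^{\Oh(1)}$ guesses, costs $\Oh^*(1.251^n)$ in total. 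The only points needing a little care are the degenerate guesses and the parallel edges possibly created by smoothing; both are routine, so the genuinely non-trivial ingredient is just keeping every reduction within the (sub)cubic regime so that~\cite{IwaNak07} applies as a black box.
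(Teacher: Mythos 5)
Your core reduction is correct and takes a genuinely different route from the paper's. The paper first runs the Hamiltonian-cycle algorithm of \cite{IwaNak07} on $G$ itself (which in particular settles every candidate endpoint pair that is adjacent, since a Hamiltonian path with adjacent endpoints closes into a Hamiltonian cycle), and then, for each \emph{non-adjacent} pair $\{u,v\}$, adds the edge $\{u,v\}$ and tests for a Hamiltonian cycle; the added edge may create up to two degree-$4$ vertices, and the paper removes the excess degree by a depth-$2$ branching based on the observation that, $G$ having no Hamiltonian cycle, every Hamiltonian cycle of $G+\{u,v\}$ must use $\{u,v\}$, so at each degree-$4$ endpoint one of the two remaining edges can be deleted. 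You instead guess the end-edges $e_u,e_v$ and delete the other edges at $u$ and $v$ \emph{before} closing the cycle, so the instance never leaves the subcubic regime and no degree-repair branching is needed. Both variants make $n^{\Oh(1)}$ calls to the $\Oh^*(1.251^n)$ routine; your enumeration of endpoints and end-edges, the handling of the degenerate guesses, and the equivalence between Hamiltonian $u$--$v$ paths of $G$ and Hamiltonian cycles of $G''$ are all fine.

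The one step that would fail is the final preprocessing from subcubic to cubic: suppressing a degree-$2$ vertex $w$ with neighbours $a,b$ does \emph{not} preserve Hamiltonicity in the direction you need, because the smoothed graph may have a Hamiltonian cycle that avoids the new edge $\{a,b\}$, and such a cycle does not lift to a cycle through $w$. Concretely, take $V=\{w,a,b,x,y,z\}$ with edge set $\{aw,wb,ax,az,by,bz,xy\}$: this graph is subcubic and has no Hamiltonian cycle (the degree-$2$ vertices $w$, $x$ and $z$ force the three edges $aw$, $ax$ and $az$ at $a$), yet after suppressing $w$ the $5$-cycle $(a,x,y,b,z)$ is Hamiltonian. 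The correct statement is only that $G$ has a Hamiltonian cycle iff the smoothed graph has one \emph{using} the new edge $\{a,b\}$, i.e., one has to carry forced edges through the reduction---which is precisely what the cubic TSP algorithms of Eppstein and of Iwama and Nakashima do internally. Fortunately the step is dispensable: those algorithms are stated for maximum degree $3$ rather than for $3$-regular graphs (the paper applies \cite{IwaNak07} directly to subcubic inputs), so you can run the black box on $G''$ itself and the rest of your argument goes through unchanged.
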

\longversion{
\begin{proof}
Let $G=(V,E)$ be a subcubic graph. Run the algorithm of \cite{IwaNak07} to find a Hamiltonian cycle. If it succeeds $G$ clearly also has a Hamiltonian
path. If it does 
not succeed we have to investigate if 
%take into account that 
 $G$ has a Hamiltonian path whose end points are not adjacent. Let $u,v \in V(G)$ be two non-adjacent vertices. To check whether $G$ has a Hamiltonian
path $uPv$, we check
 whether $G'=(V,E'):=E\cup\{\{u,v\}\})$ has a Hamiltonian cycle. If $G'$ has maximum degree at most $3$, then run the algorithm of \cite{IwaNak07}.
Otherwise, choose a vertex of degree $4$, say $u$, and two neighbors $x,z$ of $u$ distinct from~$v$. As $\{u,v\}$ belongs to every Hamiltonian cycle
of $G'$ (otherwise $G$ has a Hamiltonian cycle too), every Hamiltonian cycle of $G'$ avoids $\{u,x\}$ or $\{u,z\}$. Recursively check if
$(V,E'\setminus\{\{u,x\}\})$ or $(V,E'\setminus\{\{u,z\}\})$ has a Hamiltonian cycle. This recursion has depth at most $2$ since $G'$ has at most $2$
vertices of degree $4$.
% Finally, t
 The \longversion{\textsc{Hamiltonian Cycle }}algorithm of \cite{IwaNak07} is executed at most $4(n(n-1)/2-m)$ times.
 This algorithms runs in  $\longversion{\Oh^*(2^{(31/96)n})\subseteq}\Oh^*(1.2509^n)$ steps.
\qed
% 
%SG also works:
%Now suppose $G$ has a Hamiltonian path $uzPxv$ between vertices $u,v \in V(G)$ ($\{u,v\} \not \in E$) and we have $d_G(u)\ge 2$ and $d_G(v)\ge 2$. Then pick edges $\{u,z'\},\{x',v\} \not \in E(P)$ and subdivide them, that is, introduce vertices $a,b$ and edges $\{u,a\},\{a,z'\},\{x',b\},\{b,v\}$ and delete  $\{u,z'\},\{x',v\}$. Then finally by adding $\{a,b\}$ we obtain a cubic graph $G'$ which has a Hamiltonian cycle $auzPxvba$.\\
%On the other hand if $G'$ has a Hamiltonian cycle $C$ then there are two possibilities to consider. First, $\{a,b\}  \in C$. Then $C$ can easily be transformed into a Hamiltonian path for $G$: Let $C=a P'_C ba$. Then $P'_C$ is a Hamiltonian path in $G$.\\
%Secondly, $\{a,b\} \not \in C$. Then by substituting the subdivided edges by their originals we see that $G$ has also a hamiltonian cycle which is a contradiction.
%
%%Then we have $\{u,a\},\{a,z\},\{x',b\},\{b,v\} \in E(C)$. We construct $C':=C \setminus \{u,a\},\{a,z\},\{x',b\},\{b,v\} \cup \{x',v\}$ which is a Hamiltonian path in $G$. 
%The case where $d(u)=1$ or $d(v)=1$ is even simpler.  Let $N(u)=\{y\}$ then $G$ as a Hamiltonian path with endpoint $u$ iff $G'':=G[V\setminus \{u\}]$ has a Hamiltonian path with endpoint $y$. This means we treat $G''$ as  $G$.\\
%Now for the choice of $u,v,x,z$, we have at most $n(n-1)/2-m$ possibilities to choose $u$ and $v$, and for each choice of $u$ and $v$, we need to consider . Therefore we find a Hamiltonian path in $G$ in time $\Oh(1.251^n)$. \qed
\end{proof}
}

% \begin{lemma}\label{noleaf}
% % Let $T$ be a spanning tree and $v \in V(T)$ a leaf in $T$. If $d_G(v) \ge 2$ then there is a spanning tree $T'$ with $|I(T')| \ge |I(T)|$ such that $d_{T'}(v)= 2$.
% \end{lemma}
% %SG New proof avoiding the use of Lemma~\ref{ind} as T does not necessarily have a maximum
% % number of internal vertices
% \begin{proof}
% %Let $u \in N_{V\setminus V(T)}(v)$. 
% % % % As $d_G(v)\ge 2$, there is $u \in N(v)$ with $\{u,v\} \not \in E(T)$. Now, add $\{u,v\}$ to $T$, which creates a cycle $C$ in $T$. Exactly one other edge incident with $u$ is also in $C$, namely $\{u,z\}$. In the spanning tree $T'=(T\setminus \{u,z\}) \cup \{u,v\}$, the tree-degree of $u$ remains unchanged, $v$ has become an internal vertex of tree-degree $2$, and at most one vertex, $z$, has become a leaf. So, $T'$ is a spanning tree with $t^{T'}_2 \ge t^{T}_2$ and $d_{T'}(v)=2$.\qed
% \end{proof}

%\begin{proof}
%%Let $u \in N_{V\setminus V(T)}(v)$. 
%As $d_G(v)\ge 2$, there is $u \in N(v)$ with $\{u,v\} \not \in E(T)$. Due to Lemma~\ref{ind} we have that $d_T(u)=2$ and therefore $d_G(u)=3$. Now if we add $\{v,u\}$ to the tree there will be a cycle $C$. Now exactly one other edge incident with $u$ is also in $C$, namely $\{u,z\}$. Then $T'=(T\setminus \{u,z\}) \cup \{u,v\}$ is a spanning tree with $t^T_2=t^{T'}_2$ and $d_{T'}(v)=2$.\qed
%\end{proof}

\begin{lemma}\label{nodeg3}
Let $T$ be a spanning tree and $v \in V(T)$ with $d_T(v)=3$. Suppose there is a $u \in N(v)$ such that $d_T(u)=3$ and $\{u,v\}$ is not a bridge. Then there is a spanning tree $T'\supset (T\setminus\{\{u,v\}\})$ with $|I(T')|\ge |I(T)|$ and $d_{T'}(u)=d_{T'}(v)=2$.
\end{lemma}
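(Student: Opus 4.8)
The plan is to obtain $T'$ from $T$ by a single edge swap. Note first that the edge $\{u,v\}$ must lie in $T$: the conclusion asks for $T'\supset T\setminus\{\{u,v\}\}$ together with $d_{T'}(u)=d_{T'}(v)=2=d_T(u)-1=d_T(v)-1$, which only makes sense if removing $\{u,v\}$ actually drops the tree-degrees of $u$ and $v$. So I would start from the fact that deleting a tree edge splits the tree: let $T_u$ and $T_v$ be the two components of $T\setminus\{\{u,v\}\}$ with $u\in V(T_u)$ and $v\in V(T_v)$, giving a partition $V=V(T_u)\dcup V(T_v)$ in which $\{u,v\}$ is the \emph{only} edge of $T$ crossing the cut.

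Next I would produce an alternative crossing edge. Since $\{u,v\}$ is not a bridge of $G$, the graph obtained by deleting $\{u,v\}$ from $G$ is still connected, so there is an edge $e=\{a,b\}\in E\setminus\{\{u,v\}\}$ with $a\in V(T_u)$ and $b\in V(T_v)$; by the previous sentence $e\notin T$. This is the single place where the subcubic hypothesis enters: because $d_G(u)\le 3$ while $d_T(u)=3$, every edge of $G$ incident to $u$ already belongs to $T$, so $u$ is incident to no crossing edge other than $\{u,v\}$; hence $a\neq u$, and symmetrically $b\neq v$. Thus neither $u$ nor $v$ is an endpoint of $e$.

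Then I would set $T':=(T\setminus\{\{u,v\}\})\cup\{e\}$. Reconnecting $T_u$ and $T_v$ through $e$ shows $T'$ is connected and spanning, and it has $n-1$ edges, so it is a spanning tree; by construction $T'\supset T\setminus\{\{u,v\}\}$. Since $u$ and $v$ are not endpoints of $e$, we get $d_{T'}(u)=d_T(u)-1=2$ and $d_{T'}(v)=d_T(v)-1=2$, as required. Finally, to see $|I(T')|\ge|I(T)|$ I would check the stronger $I(T)\subseteq I(T')$: the only vertices whose tree-degree changes are $u$ and $v$ (now of degree $2$, still internal) and $a$ and $b$ (tree-degree increased by one, hence still internal if they were, possibly newly internal); every other vertex keeps its degree. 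So no internal vertex is lost.

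The argument is short, and the only genuinely delicate point is guaranteeing that the replacement crossing edge $e$ avoids both $u$ and $v$; this is precisely what can fail in graphs of higher degree and is what makes the subcubic restriction essential, so I would be careful to isolate that step rather than gloss over it.
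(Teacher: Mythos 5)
Your proof is correct and follows essentially the same edge swap as the paper's: delete $\{u,v\}$, which splits $T$ into two components, and reconnect them with a non-tree edge $e$ that exists because $\{u,v\}$ is not a bridge. You are in fact more explicit than the paper on the one delicate point---that $e$ cannot be incident to $u$ or $v$, since subcubicity together with $d_T(u)=d_T(v)=3$ forces all their incident edges to lie in $T$ already---and your direct check that $I(T)\subseteq I(T')$ replaces the paper's equivalent accounting of gained versus lost $2$-vertices.
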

\begin{proof}
By removing $\{u,v\}$, $T$ is separated into two parts $T_1$ and $T_2$. The vertices $u$ and $v$ become 2-vertices.  As $\{u,v\}$ is not a bridge, there is another edge $e \in E \setminus E(T)$ connecting $T_1$ and $T_2$. By adding $e$ we lose at most two 2-vertices. Then let $T':=(T \setminus \{\{u,v\}\}) \cup \{e\}$ and it follows that $|I(T')|\ge |I(T)|$.\qed
\end{proof}

\subsection{Reduction Rules}
Let $E' \subseteq E$. Then, $\partial E':=\{\{u,v\} \in E \setminus E' \mid u \in V(E')\}$ are the edges outside $E'$ that have a common end point with an edge in $E'$ and $\partial_V E':=V(\partial E') \cap V(E')$ are the vertices that have at least one incident edge in $E'$ and another incident edge not in $E'$.
In the course of the algorithm we will maintain an acyclic subset of edges $F$ which will be part of the final solution. The following invariant will always be true: $G[F]$ consists of a tree $T$ and a set $P$ of \emph{pending tree edges (pt-edges)}. Here a pt-edge $\{u,v\} \in F$ is an edge with one end point $u$ of degree $1$ and the other end point $v \not \in V(T)$. $G[T \cup P]$ will always consist of $1+|P|$ components. \\[1ex]
Next we present \longversion{a sequence of}\shortversion{several} reduction rules. \longversion{Note that t}\shortversion{T}he order in which they are applied is crucial\shortversion{: B}\longversion{. We assume that b}efore a rule is applied the preceding ones were carried out exhaustively.

\begin{figure}[bt]
\psfrag{a}{$a$}
\psfrag{b}{$b$}
\psfrag{u}{$u$}
\psfrag{v}{$v$}
\psfrag{w}{$w$}
\psfrag{x}{$x$}
\psfrag{z}{$z$}
\centering
%\subfigure[]{\label{1a}\includegraphics[scale=\scl]{redrule2.eps}}\hspace*{3ex}
\subfigure[ ]{\label{1b}
\includegraphics[scale=\scl]{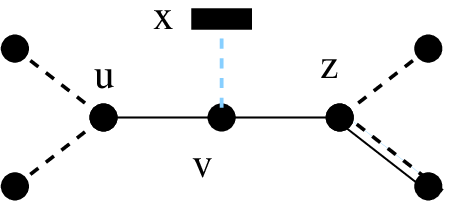}}\hspace*{3ex}
\subfigure[ ]{\label{1c}
\includegraphics[scale=\scl]{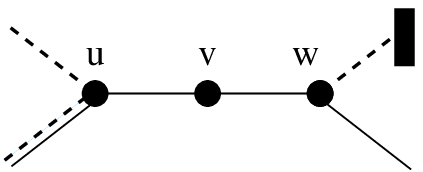}}
\caption{Light edges may be not present. Double edges (dotted or solid, resp.) refer to edges which are either $T$-edges or not, resp. Edges attached to oblongs are pt-edges.}
\label{attach}
\end{figure}

\begin{enumerate}\itemsep.5pt
%\item \redrule{RedundantEdge}{Delete any edge $\{u,v\} \in E\setminus E(T)$ such that $u,v \in V(T)$.}
\item \redrule{Cycle}{Delete any edge $e \in E$ such that $E(T)\cup\{e\}$ has a cycle.}
\item \redrule{Bridge}{If there is a bridge $e \in \partial E(T)$, then add $e$ to $F$. }
\item \redrule{Deg1}{If there is a degree-$1$ vertex $u \in V\setminus V(F)$, then add its incident edge to $F$. }
\item \redrule{Pending}{If there is a vertex $v$ that is incident to $d_G(v)-1$ pt-edges, then remove its incident pt-edges. }
%\item \redrule{Triangle}{Let  $a_1,a_2,a_3$ form a triangle. Then contract them into one vertex $r$ and transform created double edges into simple ones if they are of the same type. If one is a $T$-edge and the other not only the $T$-edge remains}
%\marginpar{Explain double edges rule, needed?}
\item \redrule{ConsDeg2}{If there are edges $\{v,w\},\{w,z\} \in E \setminus E(T)$ such that $d_G(w)=d_G(z)=2$, then delete $\{v,w\},\{w,z\}$ from $G$ and add the edge $\{v,z\}$ to $G$.}
\item \redrule{Deg2}{If there is an edge $\{u,v\} \in \partial E(T)$ such that $u \in V(T)$ and $d_G(u)=2$, then add $\{u,v\}$ to $F$.}
%\item \redrule{DoublePending}
%{Let $\{u,v\} \in E$ be a pt-edge with $u \in V(T)$. If there is a second pt edge $\{a,b\}$ such that there is $\{u,a\} \in  E$ then delete $\{u,a\}$, See Fig.~\ref{1a}.}
\item \redrule{Attach}{If there are edges $\{u,v\},\{v,z\} \in \partial E(T)$ such that $u,z \in V(T)$, $d_T(u)=2$, $1 \le d_T(z) \le 2$, then delete
$\{u,v\}$. .}
\item \redrule{Attach2}{If there is a vertex $u \in \partial_V E(T)$ with $d_T(u)=2$ and $\{u,v\} \in E \setminus  E(T)$ such that $v$ is incident to
a pt-edge, then delete $\{u,v\}$. See Fig.~\ref{1b}}
\item \redrule{Special}{If there are two edges $\{u,v\},\{v,w\} \in E \setminus F$ with  $d_T(u) \ge 1$, $d_G(v)=2$, and $w$ is incident to a pt-edge, then add $\{u,v\}$ to $F$ (Fig.~\ref{1c}). }
\end{enumerate}

\begin{lemma}
The reduction rules stated above are sound.
\end{lemma}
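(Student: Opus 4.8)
The plan is to argue soundness rule-by-rule, where "sound" means: applying the rule to an instance $(G,F)$ yields an instance $(G',F')$ such that $G$ has a spanning tree extending $F$ whose number of internal vertices equals the maximum for $G$ if and only if $G'$ has one extending $F'$ with the same optimal value; moreover the maintained invariant (that $G[F]$ is a tree $T$ plus a collection $P$ of pt-edges, with $1+|P|$ components) is preserved. I would first record two standing facts used throughout: (i) by Lemma~\ref{ind} we may assume the sought optimum $T_o$ is not a Hamiltonian path and hence has an independent leaf set, and (ii) by Proposition~\ref{maxt2} on subcubic graphs maximizing internal vertices is the same as maximizing $t_2^T$, so it suffices to check each rule never destroys the possibility of reaching the maximum number of degree-$2$ vertices. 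I would handle the rules in the order they are listed, since later rules rely on earlier ones being applied exhaustively.

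The easy rules come first. \textbf{Cycle} is trivial: an edge closing a cycle with $E(T)$ can never be in a spanning tree containing $F$, so deleting it changes nothing. \textbf{Bridge} and \textbf{Deg1} are forced moves: a bridge in $\partial E(T)$ must be in every spanning tree of $G$, and the unique edge at a degree-$1$ vertex outside $F$ likewise must be used; in both cases one checks the invariant survives (the bridge either extends $T$ or merges a pt-edge appropriately; the degree-$1$ edge becomes a pt-edge or extends $T$). \textbf{Pending}: if $v$ has $d_G(v)-1$ incident pt-edges, then in any spanning tree at most one of the non-pt edges at $v$ can be kept while the others are forced, but actually all $d_G(v)-1$ pt-leaves must attach to $v$, so $v$ is internal and its remaining edge is determined — removing the pt-edges (after accounting for $v$ becoming internal) does not affect optimality. \textbf{ConsDeg2} is a standard suppression of a degree-$2$ vertex $w$ lying on a path of not-yet-chosen edges: any spanning tree must route through $w$ and $z$ exactly as through the contracted edge $\{v,z\}$, and $w,z$ contribute degree-$2$ vertices in $G$ iff the corresponding configuration does in $G'$ — here one must be slightly careful to show the count $t_2$ is shifted by a fixed constant, which I would absorb into the bookkeeping of the algorithm.

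The substantial rules are \textbf{Deg2}, \textbf{Attach}, \textbf{Attach2}, and \textbf{Special}, and these are where the real work lies. For \textbf{Deg2}: if $u \in V(T)$ has $d_G(u)=2$ and one edge of $u$ is already in $T$, its other edge $\{u,v\}$ is the only way to keep $u$ from being a leaf; since making $u$ internal can only help (it costs nothing — $u$ has no third edge to force extra leaves), we may safely commit $\{u,v\}$ to $F$. For \textbf{Attach} and \textbf{Attach2} the argument is an exchange argument in the spirit of the proof of Lemma~\ref{nodeg3}: suppose an optimal extension $T_o$ of $F$ uses the edge that the rule deletes; I would show one can reroute $T_o$ — swapping that edge for another available edge at the relevant vertex — to obtain an extension of $F$ that avoids the deleted edge without decreasing $t_2$, using that the vertex $u$ already has $T$-degree $2$ (so it is "saturated" and gains nothing from a third incident tree-edge, by Proposition~\ref{maxt2}) and that the other endpoint is constrained (has low $T$-degree or carries a pt-edge). \textbf{Special} is the converse: under its hypotheses the edge $\{u,v\}$ is essentially forced because $v$ has degree $2$ and its only alternative leads to $w$, which is already committed via a pt-edge in a way that would create a cycle or an isolated component, so $\{u,v\}$ must be in $F$; again one checks the component/tree invariant. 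I expect the main obstacle to be the \textbf{Attach}/\textbf{Attach2} exchange arguments: getting the case analysis right (which neighbor to swap in, verifying no cycle is created, verifying connectivity of the rerouted tree, and confirming $t_2$ does not drop) is delicate, and it is exactly here that the assumption "preceding rules were applied exhaustively" must be invoked to rule out the awkward configurations. I would present these as a short lemma each with an explicit figure reference (Fig.~\ref{1b}, Fig.~\ref{1c}) and a two-line exchange, deferring the most tedious degree checks to the reader.
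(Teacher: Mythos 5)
Your overall strategy---fixing an optimal spanning tree $T_o \supset F$ and arguing rule by rule via exchange arguments, relying on the preceding rules having been applied exhaustively---is the paper's approach, and your sketches for \textbf{Cycle}, \textbf{Bridge}, \textbf{Deg1}, \textbf{Pending}, \textbf{ConsDeg2}, \textbf{Attach} and \textbf{Attach2} match the intended arguments; in particular you correctly identify that \textbf{Attach2} needs the rerouting idea from the proof of Lemma~\ref{nodeg3}.

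There is, however, a genuine error in your treatment of \textbf{Special}. You argue that $\{u,v\}$ is \emph{forced} because the alternative route through $w$ ``would create a cycle or an isolated component.'' It would not: since \textbf{Bridge} has already been applied exhaustively, $\{u,v\}$ is not a bridge, so spanning trees extending $F$ and avoiding $\{u,v\}$ do exist---such a tree takes $\{v,w\}$ (forced at the degree-$2$ vertex $v$), the pt-edge $\{w,p\}$, and $w$'s third edge $\{w,z\}$ (needed to connect the component $\{v,w,p\}$ to the rest), making $w$ a $3$-vertex. The soundness of \textbf{Special} is therefore an optimality exchange, not a feasibility argument: given such a $T_o$, replace $\{v,w\}$ by $\{u,v\}$; the result is still a spanning tree ($v$ remains a leaf, now hanging off $u$), $w$ drops to a $2$-vertex, $u$ stays internal or becomes internal, and no internal vertex is lost, so the new optimal tree contains $\{u,v\}$. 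A smaller gloss of the same kind occurs in \textbf{Deg2}: ``making $u$ internal costs nothing'' is not literally true, since adding $\{u,v\}$ to $T_o$ closes a cycle and one must delete $v$'s other non-pending edge $\{v,z\}$, which may turn $z$ into a leaf; the argument has to observe that this loss at $z$ is compensated by the gain at $u$. Both repairs are short, but as written these two steps do not go through.
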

\begin{proof}
Let $T_o \supset F$ be an optimal spanning tree of $G$. % $T_o \supset F$.
The first three rules are correct for the purpose of connectedness and acyclicity of the evolving spanning tree. 
% For {\bf Triangle} see the extra section~\ref{triangles}. \marginpar{needed?}
\begin{description}\itemsep.5pt
\item[\bf Pending] is correct as the other edge incident to $v$ (which will be added to $P$ by a subsequent {\bf Deg1} rule) is a bridge and needs to be in any spanning tree. 
\item [\bf ConsDeg2] We implicitly assume that we can add $\{w,z\}$ to %an optimal spanning tree 
 $T_o \supset F$. If $\{w,z\} \notin E(T_o)$ then $\{v,w\} \in E(T_o)$. Then we can simply exchange the two edges giving a solution $T'_o$ with $\{w,z\} \in E(T'_o)$ and $t_2^{T_o}\le t_2^{T_o'}$.  %The soundness of {\bf Deg2} follows directly from Lemma~\ref{noleaf}.
\item[\bf Deg2] Since the preceding reduction rules do not apply, we have $d_G(v)=3$ and there is one edge, say
$\{v,z\}$, $z\neq u$, that is not pending. %Any spanning tree  $\hat T$ of $G$ that extends $F$ and 
Assume $T_o$ 
has $u$ as a leaf. Define 
% can be turned into 
another spanning tree $T_o'\supset F$ by setting $T_o'=(T_o\cup\{\{u,v\}\})\setminus\{v,z\}$.
% such that 
Since 
$|I(T_o)|\leq |I(T_o')|$, $T_o'$ is also optimal.   
% \item[\bf DoublePending] Consider an optimal spanning tree $T_o \supset F$. Due to Lemma~\ref{nodeg3}  we may disregard the case where $\{z,u\},\{u,a\},\{a,q\} \in E(T_o)$ as $\{u,a\}$ is not a bridge due to the reduction rule priorities. Now suppose $\{u,a\} \in E(T_o)$, $\{a,q\} \not \in E(T_o)$. Then delete $\{u,a\}$ and add $\{a,q\}$ to the edge set of $T_o$. In this way we obtain a spanning tree $T'_o$ with at least as many vertices of degree 2, as now $d_{T'_o}(u)=2$ and at most one other 2-vertex ($q$) of $T_o$ becomes of degree $3$ by adding $\{a,q\}$.
%\item[\bf Attach] For the sake of acyclicity and connectedness we have to adjoin exactly one of the edges $\{u,v\},\{v,z\}$ to $T$. We have to do it in a way that the least number of 2-vertices of $T$ are destroyed. Hence we add $\{v,z\}$ (eventually even creating a 2-vertex).
\item[\bf Attach] If $\{u,v\} \in E(T_o)$ then $\{v,z\} \not \in E(T_o)$ due to the acyclicity of $T_o$ and as $T$ is connected. Then by exchanging $\{u,v\}$ and $\{v,z\}$ we obtain a solution $T_o'$ with at least as many 2-vertices.
\item[\bf Attach2] Suppose $\{u,v\} \in E(T_o)$. Let $\{v,p\}$ be the pt-edge and $\{v,z\}$ the third edge incident to $v$ (that must exist and is not pending, since {\bf Pending} did not apply). Since {\bf Bridge} did not apply, $\{u,v\}$ is not a bridge. 
Firstly, suppose $\{v,z\} \in E(T_o)$. Due to the proof of Lemma~\ref{nodeg3}, there is also an optimal solution $T'_o\supset F$ with $\{u,v\} \notin E(T'_o)$.
Secondly, assume $\{v,z\} \notin E(T_o)$. Then $T'=(T_o \setminus \{\{u,v\}\}) \cup \{\{v,z\}\}$ is also optimal as $u$ has become a 2-vertex.
%\item[\bf Special] First suppose $\{v,w\} \not \in E(T_{o})$. But then $\{u,v\},\{w,z\} \in E(T_o)$  Now add $\{v,w\}$ to $E(T_o)$ and delete $\{u,v\}$. This gives on optimal solution $T'_o$ with $\{v,w\} \in E(T'_o)$. Secondly, suppose $\{v,w\} \in E(T_o)$ but $\{u,v\} \not \in E(T_o)$. By inserting $\{u,v\}$ in $E(T'_o)$ we create a cycle which must pass over $\{w.z\}$. Now by deleting $\{w,z\}$ we obtain an optimal solution $\tilde{T}_o$ as the number of 2-vertices stays the same.
\item[Special] Suppose $\{u,v\} \not \in E(T_o)$. Then $\{v,w\},\{w,z\} \in E(T_o)$ where $\{w,z\}$ is the third edge incident to $w$. Let $T_o':=(T_o \setminus \{\{v,w\}\}) \cup \{\{u,v\}\}$. In $T_o'$, $w$ is a 2-vertex and hence $T'$ is also optimal.\qed
\end{description}
\end{proof}

\subsection{The Algorithm}

The algorithm we describe here is recursive. It constructs a set $F$ of edges which are selected to be in every spanning tree considered in the current recursive step. The algorithm chooses edges and considers all relevant choices for adding them to $F$ or removing them from $G$. It selects these edges based on priorities chosen to optimize the running time analysis. Moreover, the set $F$ of edges will always be the union of a tree $T$ and a set of edges $P$ that are not incident to the tree and have one end point of degree $1$ in $G$ (pt-edges). We do not explicitly write in the algorithm that edges move from $P$ to $T$ whenever an edge is added to $F$ that is incident to both an edge of $T$ and an edge of $P$. To maintain the connectivity of $T$, the algorithm explores edges in the set $\partial E(T)$ to grow $T$.

If $|V|>2$ every spanning tree $T$ must have a vertex $v$ with $d_T(v) \ge 2$. Thus initially the algorithm creates an instance for every vertex $v$ and every possibility that $d_T(v) \ge 2$. Due to the degree constraint there are no more than $4n$ instances. After this initial phase, the algorithm proceeds as follows.

\begin{enumerate}
 \item[1.] Carry out each reduction rule exhaustively in the given order (until no rule applies).
 \item[2.] If $\partial E(T)=\emptyset$ and $V \neq V(T)$, then $G$ is not connected and does not admit a spanning tree. Ignore this branch.
 \item[3.] If $\partial E(T)=\emptyset$ and $V = V(T)$, then return $T$.
 \item[4.] Select $\{a,b\} \in \partial E(T)$ with $a \in V(T)$ according to the following priorities (if such an edge exists):
  \begin{itemize}
   \item[a)] there is an edge $\{b,c\} \in \partial E(T)$,
   \item[b)] $d_G(b)=2$,
   \item[c)] $b$ is incident to a pt-edge, or
   \item[d)] $d_T(a)=1$.
  \end{itemize}
  Recursively solve two instances where $\{a,b\}$ is added to $F$ or removed from $G$ respectively, and return a spanning tree with most internal vertices.
 \item[5.] Otherwise, select $\{a,b\} \in \partial E(T)$ with $a \in V(T)$. Let $c,x$ be the other two neighbors of $b$. Recursively solve three instances where 
  \begin{itemize}
   \item[(i)] $\{a,b\}$ is removed from $G$,
   \item[(ii)] $\{a,b\}$ and $\{b,c\}$ are added to $F$ and $\{b,x\}$ is removed from $G$, and
   \item[(iii)] $\{a,b\}$ and $\{b,x\}$ are added to $F$ and $\{b,c\}$ is removed from $G$.
  \end{itemize}
  Return a spanning tree with most internal vertices.
\end{enumerate}

\longversion{\subsection{An Exact Analysis of the Algorithm}}

By a Measure \& Conquer analysis taking into account the degrees of the vertices, their number of incident edges that are in $F$, and to some extent
the degrees of their neighbors, 
we obtain the following result.

\begin{theorem}\label{thm:exalg}
{\sc Max Internal Spanning Tree} can be solved in time $\Oh^*(\rt^n)$ on subcubic graphs.
\end{theorem}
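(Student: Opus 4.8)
The plan is to run a Measure \& Conquer analysis on the branching algorithm. I would assign to every vertex $v$ a weight $w(v)\in[0,1]$ that depends on the current local state of $v$: its degree $d_G(v)$ (at most $3$), the number of its incident edges that already lie in $F$, and---for the fine cases---whether its neighbors are still ``free'' (untouched by $F$) or already constrained by belonging to $V(T)$, carrying a pt-edge, or sitting in $\partial_V E(T)$. The measure of an instance is $\mu(G,F):=\sum_{v\in V} w(v)$. A free degree-$3$ vertex gets weight $1$; a vertex that has already been forced internal ($d_T(v)\ge 2$) or is a settled leaf gets weight $0$; degree-$2$ vertices, $T$-degree-$1$ vertices, pt-edge end points, and vertices adjacent to the tree boundary get intermediate weights, chosen as free parameters. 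Since each vertex contributes at most $1$, we get $\mu(G,F)\le n$, so any running-time bound of the form $\Oh^*(c^{\mu})$ yields $\Oh^*(c^n)$; and since the recursion depth is $\Oh(n)$, the algorithm uses only polynomial space.

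The second step is to check that none of the nine reduction rules increases $\mu$, so that they may be ignored in the branching recurrences. The rules {\bf Cycle}, {\bf Bridge}, {\bf Deg1}, {\bf Deg2}, {\bf Special} only move edges into $F$ and hence can only lower the weights of the affected end points; {\bf ConsDeg2} deletes a degree-$2$ vertex outright; {\bf Pending}, {\bf Attach}, {\bf Attach2} delete edges and thus lower degrees, and one verifies case by case---using the fact that earlier rules have been applied exhaustively---that the new state of every affected vertex has weight no larger than before. In fact many of these rules strictly decrease $\mu$, which is harmless.

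The core is the branching analysis. In Step~4 an edge $\{a,b\}\in\partial E(T)$ with $a\in V(T)$ is chosen under the priority order (a)--(d) and we branch into ``add $\{a,b\}$ to $F$'' versus ``delete $\{a,b\}$ from $G$''. In the ``add'' branch, $a$ gains an $F$-edge (and drops to weight $0$ if $d_T(a)$ reaches $2$), $b$ enters $V(T)$, and crucially the neighbors of $b$ lose weight as well, because once they border the new tree boundary one of {\bf Attach}, {\bf Attach2}, {\bf Special}, {\bf Deg2} fires and shaves their weight---this is exactly what the priority order is designed to guarantee. In the ``delete'' branch, $d_G(b)$ drops by one and a reduction rule typically then triggers at $b$, releasing further weight. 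For each priority case one records the branching vector $(\Delta\mu_{\mathrm{add}},\Delta\mu_{\mathrm{del}})$, and for the degree-$3$ $3$-way branch of Step~5 (delete $\{a,b\}$; or fix the path $a\,b\,c$ and delete $\{b,x\}$; or fix $a\,b\,x$ and delete $\{b,c\}$) one records a vector with three components. Each vector gives a characteristic inequality $\sum_i c^{-\Delta\mu_i}\le 1$; a finite numerical optimization over the few free weight parameters makes the largest root $c$ over all cases equal $\rt$, which together with $\mu\le n$ proves the $\Oh^*(\rt^n)$ bound.

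I expect the main obstacle to be the bookkeeping across the branching cases: one must enumerate the local configurations of $a$, $b$, and the neighbors of $b$ (their degrees and $F$-incidences) that are consistent with ``no higher-priority edge exists'' and ``all reduction rules are exhausted'', and argue in each that enough weight is released in the ``add'' branch. Finding a single weight assignment, with all weights in $[0,1]$, that simultaneously tames the Step~5 three-way branch and every sub-case of Step~4 is the delicate part; once the weights are fixed, the recurrence solving is routine.
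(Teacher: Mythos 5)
Your plan is essentially the paper's proof: the paper runs exactly this Measure \& Conquer analysis on the same branching algorithm, with a vertex measure $\mu(G)=\omega_2|D_2|+\omega_3^1|D_3^1|+\omega_3^2|D_3^2\setminus D_3^{2\ast}|+|D_3^0|+\omega_3^{2\ast}|D_3^{2\ast}|$ whose states are determined by $d_G(v)$, $d_T(v)$ and (for the starred class) the state of the remaining non-tree neighbor, a lemma that no reduction rule increases $\mu$, and a case-by-case derivation of branching vectors for Steps 4(a)--(d) and 5 that are then evaluated at the fixed weights $\omega_2=0.3193$, $\omega_3^1=0.6234$, $\omega_3^2=0.3094$, $\omega_3^{2\ast}=0.4144$. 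So there is no divergence in method. Two caveats, though. First, the one place where you commit to a concrete weight --- giving weight $0$ to any vertex with $d_T(v)\ge 2$ --- is at odds with what the analysis needs: the paper keeps a positive weight $\omega_3^2$ on degree-$3$ vertices of tree-degree $2$ precisely because such a vertex still has one undecided incident edge, and the ``delete $\{a,b\}$'' branch of Step 5 (where $d_T(a)=2$) draws its gain $\omega_3^2+\Delta_2$ largely from $a$ finally being settled; with your choice that branch loses its main source of measure decrease and the recurrences would not close at $\rt$. Since you otherwise treat the weights as free parameters this is repairable, but it shows the state space you chose is slightly too coarse. Second, your proposal defers the entire substance of the proof --- the enumeration of the local configurations of $a$, $b$ and $b$'s neighbors consistent with the priority order and exhausted reduction rules, and the verification that one weight vector simultaneously satisfies all resulting inequalities --- and this casework (including subtleties such as the $D_3^{2\ast}$ refinement and the extra gains harvested from \textbf{ConsDeg2} and \textbf{Deg2} in the delete branches) is where the constant $\rt$ actually comes from; without it the claim is a plausible program, not a proof.
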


% \section{Running time analysis for the non parameterized algorithm}

%\subsubsection{Branching}
% \subsubsection{The Measure}
\shortversion{The proof is omitted for reasons of space, but we will}\longversion{Let us} provide measure we used in the following:
Let $D_2:=\{v \in V \mid d_G(v)=2, d_T(v)=0\}$, $D_3^\ell:=\{v \in V \mid d_G(v)=3, d_T(v)=\ell\}$ and $D_3^{2\ast}:=\{v \in D_3^2 \mid N_G(v)
\setminus N_T(v)=\{u\} \text{ and } d_G(u)=d_T(u)=2\}$. 
Then the measure we use for our running time bound is:
$$
\mu (G)=\omega_2 \cdot |D_2|+ \omega_3^1 \cdot |D_3^1| + \omega_3^2 \cdot |D_3^2 \setminus D_3^{2\ast}| + |D_3^0| + \omega_3^{2\ast}\cdot|D_3^{2\ast}|
%\text{ with } \omega_2=0.374,\, \omega_3^0=1,\, \omega_3^1=0.5990,\, \omega_3^2=0.3126.
$$
with $\omega_2=0.3193$    $\omega_3^1=0.6234$,    $\omega_3^2=0.3094$ and     $\omega_3^{2\ast}=0.4144$.%\\
%with $\omega_2=0.3321$    $\omega_3^1=0.6149$,    $\omega_3^2=0.3073$ and     $\omega_3^{2\ast}=0.3897$.\\
\longversion{

Let $\Delta_3^0:=\Delta_3^{0\ast}:=1-\omega_3^{1}$, $\Delta_3^1:=\omega_3^1-\omega_3^{2}$, $\Delta_3^{1\ast}:=\omega_3^1-\omega_3^{2\ast}$, $\Delta_3^2:=\omega_3^2$, $\Delta_3^{2\ast}:=\omega_3^{2 \ast}$ and $\Delta_2=1-\omega_2$. We define $\tilde{\Delta}_3^{i}:= \min\{\Delta_3^i,\Delta_3^{i\ast}\}$ for $1 \le i \le 2$, 
$\Delta_m^{\ell}=\min_{0 \le j \le \ell} \{\Delta_3^j\}$, $\tilde \Delta_m^{\ell}=\min_{0 \le j \le \ell} \{\tilde \Delta_3^j\}$.}
\longversion{
 }
The proof of the theorem is using the following result:

\begin{lemma}\label{noinc}
None of the reduction rules increase $\mu$ for the given weights.
\end{lemma}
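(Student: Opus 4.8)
The plan is to verify, one reduction rule at a time, that applying the rule never increases the potential $\mu$. Since $\mu$ is a sum of weighted cardinalities of the vertex classes $D_2$, $D_3^1$, $D_3^2\setminus D_3^{2\ast}$, $D_3^0$, and $D_3^{2\ast}$, I would track for each rule exactly which vertices change class (or disappear from $G$, or change their $G$- or $T$-degree), and check that the net change $\Delta\mu$ is $\le 0$. The weights satisfy $0<\omega_3^2<\omega_2<\omega_3^{2\ast}<\omega_3^1<1$, and the class $D_3^0$ has weight $1$; these inequalities, together with the auxiliary quantities $\Delta_3^i$, $\Delta_2$, etc.\ defined just above, will be the arithmetic facts I lean on. The reduction rules \textbf{Cycle}, \textbf{Bridge}, \textbf{Deg1}, \textbf{Pending} only delete edges from $G$ or move edges into $F$ along bridges, and each such operation can only lower a vertex's $G$-degree or raise its $T$-degree towards $2$; a short case check shows each of these transitions is non-increasing for $\mu$ (e.g.\ deleting an edge incident to a $D_3^0$ vertex of $G$-degree $3$ turns it into a $D_2$ vertex, changing its contribution from $1$ to $\omega_2<1$).

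For the more delicate rules I would argue as follows. For \textbf{ConsDeg2}, the vertex $w$ (with $d_G(w)=2$, and by exhaustiveness of earlier rules $d_T(w)=0$, so $w\in D_2$) is removed entirely and the two incident edges are merged into one edge $\{v,z\}$; the contribution $\omega_2$ of $w$ vanishes and no other vertex's class changes (the merge does not alter $d_G$ or $d_T$ of $v$ or $z$), so $\Delta\mu=-\omega_2<0$. For \textbf{Deg2}, the vertex $u$ has $d_G(u)=2$ and lies in $\partial_V E(T)$ with $d_T(u)=1$, so $u\in D_2$ after... actually $u$ already has an incident $T$-edge, so one must be slightly careful about bookkeeping; adding $\{u,v\}$ to $F$ raises $d_T(u)$ to $2$ and raises $d_T(v)$ by one, and I would check the resulting class changes (for $u$, a degree-$2$ vertex, this finalizes it; for $v$, which has $d_G(v)=3$, moving from $D_3^\ell$ to $D_3^{\ell+1}$ changes its weight by $-\Delta_3^\ell$ or similar) all contribute non-positively. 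The rules \textbf{Attach}, \textbf{Attach2}, \textbf{Special}, which delete an edge $\{u,v\}$ where $u$ has $d_T(u)=2$ and is thus already ``saturated'', only reduce $d_G$ of the already-finalized vertex $u$ and of $v$; again I verify the handful of resulting transitions.

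The main obstacle, I expect, is the interaction between the ordinary class $D_3^2\setminus D_3^{2\ast}$ and the special class $D_3^{2\ast}$, which carries the larger weight $\omega_3^{2\ast}$. A reduction rule that changes a neighbor of a $D_3^2$ vertex can promote it into $D_3^{2\ast}$ (because the condition ``$N_G(v)\setminus N_T(v)=\{u\}$ with $d_G(u)=d_T(u)=2$'' becomes true), which \emph{increases} $\mu$ by $\omega_3^{2\ast}-\omega_3^2>0$ for that vertex; one has to show that whenever this happens it is more than compensated by a simultaneous decrease elsewhere (typically because the same rule deleted an edge, removing or down-weighting some other vertex). Pinning down, for each rule, that every such promotion is paid for is where the chosen numerical values of the weights actually get used, and it is the part of the argument I would be most careful to do exhaustively rather than by appeal to symmetry. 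Conversely, rules that make $d_T(u)=2$ for the special neighbor $u$ destroy a $D_3^{2\ast}$ membership and hence \emph{decrease} $\mu$, which is harmless. Once every rule is checked, Lemma~\ref{noinc} follows.
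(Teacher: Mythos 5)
Your overall strategy is the same as the paper's: a rule-by-rule audit of the class changes among $D_2$, $D_3^0$, $D_3^1$, $D_3^2\setminus D_3^{2\ast}$, $D_3^{2\ast}$, and you correctly isolate the one genuine difficulty, namely that an edge deletion can promote a vertex from $D_3^2\setminus D_3^{2\ast}$ into $D_3^{2\ast}$ and thereby \emph{increase} $\mu$ by $\omega_3^{2\ast}-\omega_3^2>0$. The easy halves of your argument (rules that add edges to $T$ only move vertices to lower-weight classes; \textbf{ConsDeg2} simply removes a $D_2$ vertex worth $\omega_2$) match the paper's reasoning.

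The gap is that you announce the decisive step rather than carry it out. The paper closes the argument by observing that a promotion into $D_3^{2\ast}$ can only be triggered when a new degree-$2$ vertex is created, then enumerating which rules can do that: \textbf{Cycle} (whose newly created degree-$2$ vertices are never adjacent to a surviving $D_3^2\setminus D_3^{2\ast}$ vertex without triggering \textbf{Cycle} again) and \textbf{Attach} in the case $d_G(v)=2$, for which the explicit compensation $\omega_3^2+\Delta_2-(\omega_3^{2\ast}-\omega_3^2)>0$ is checked; no other rule creates degree-$2$ vertices. Without this enumeration and the numerical check, the claim ``every promotion is paid for'' is exactly the content of the lemma, so your proposal as written does not yet establish it. Two smaller points: you misclassify \textbf{Special} as an edge-deletion rule with $d_T(u)=2$, whereas it adds $\{u,v\}$ to $F$ with only $d_T(u)\ge 1$ (the paper groups it with \textbf{Bridge}, \textbf{Deg1}, \textbf{Deg2} as an adding rule); and for \textbf{Deg2} the vertex $u$ with $d_G(u)=2$, $d_T(u)=1$ is in none of the weighted classes (it already has weight $0$), so the bookkeeping you flag as delicate there is in fact vacuous --- the relevant change is at $v$, which moves from $D_3^0$ to $D_3^1$ or from $D_2$ out of the measure.
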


\longversion{
\begin{proof}
{\bf Bridge}, {\bf Deg1}, {\bf Deg2} and {\bf Special} add edges to $T$. Due to the definitions of $D_3^\ell$ and $D_3^{2\ast}$ and the
 choice of the weights it can be seen that $\mu$ only decreases. It is also easy to see that the deletion of edges $\{u,v\}$ with $d_T(u)\ge 1$ is
safe with respect to $u$. The weight of $u$ can only decrease due to this. Nevertheless, the rules which delete edges might cause that a $v \in
D_3^{2}\setminus D_3^{2\ast}$ will be in $D_3^{2\ast}$ afterwards. Thus, we have to prove that in this case the overall reduction is enough. A
sufficient criterion that the described scenario takes place is if degree 2 vertices are created. {\bf Cycle} may create vertices of degree 2, but
none which are adjacent to a vertex in $D_3^{2}\setminus D_3^{2\ast}$ and are not subject to another application of {\bf Cycle}. The next reduction
rule which may create vertices of degree 2 is {\bf Attach} when $d(v)=2$. The minimum reduction is $\omega_3^2+ \Delta_2
-(\omega_3^{2\ast}-\omega_3^2)>0$. No other reduction rule creates degree 2 vertices.\qed
\end{proof}

\begin{proof} (Theorem~\ref{thm:exalg}) 
As the algorithm deletes edges or moves edges from $E\setminus F$ to $F$, cases 1--3 do not contribute to the exponential function in the running time of the algorithm. It remains to analyze cases 4 and 5, which we do now. Note that after applying the reduction rules exhaustively, we have that for all $v \in \partial_V E(T)$, $d_G(v)=3$ ({\bf Deg2}) and for all $u \in V$, $d_P(u) \le 1$ ({\bf Pending}).

\begin{enumerate}\itemsep.5pt
 \item[\bf 4.(a)] Obviously, $\{a,b\},\{b,c\} \in E\setminus E(T)$, and there is a vertex $d$ such that $\{c,d\} \in E(T)$; see Figure~\ref{branch1}. We must have $d_T(a)=d_T(c)=1$ (due to the reduction rule {\bf Attach}).
We consider three cases.
\begin{itemize}
\item $d_G(b)=2$. When $\{a,b\}$ is added to $F$, {\bf Cycle} deletes $\{b,c\}$. We get an amount of $\omega_2$ and $\omega_3^1$ as $b$ drops out of $D_2$ and $c$ out of $D_3^1$ ({\bf Deg2}). Also $a$ will be removed from $D_3^1$ and added to $D_3^2$ which amounts to a reduction of at least $\tilde{\Delta}_3^1$. When $\{a,b\}$ is deleted, $\{b,c\}$ is added to $E(T)$ ({\bf Bridge}). By a symmetric argument we get a reduction of $\omega_2+\omega_3^1+\tilde \Delta_3^1$ as well. In total this yields a $(\omega_2+\omega_3^1+\tilde \Delta_3^1,\omega_2+\omega_3^1+\tilde \Delta_3^1)$-branch.
\item $d_G(b)=3$ and there is one  pt-edge attached to $b$. Adding $\{a,b\}$ to $F$ decreases the measure by $\tilde \Delta_3^1$ (from $a$) and $2\omega_3^1$ (deleting $\{b,c\}$, then {\bf Deg2} on $c$). By Deleting $\{a,b\}$  we decrease $\mu$ by $2\omega_3^1$ and by $\tilde \Delta_3^1$ (from $c$).   This amounts to a $(2\omega_3^1+\tilde \Delta_3^1,2\omega_3^1+\tilde \Delta_3^1)$-branch.
\item $d_G(b)=3$ and no pt-edge is attached to $b$. Let $\{b,z\}$ be the third edge incident to $b$. In the first branch the measure drops by at least $\omega_3^1+\tilde \Delta_3^1$ from $c$ and $a$ ({\bf Deg2}), $1$ from $b$ ({\bf Deg2}). 
In the second branch we get $\omega_3^1+\Delta_2$. Observe that we also get an amount of at least $\tilde \Delta_m^1$ from $q \in N_T(a) \setminus \{b\}$ if $d_G(q)=3$. If $d_G(q)=2$ we get $\omega_2$.  It results %a $(\Delta_3^2+2\Delta_3^1+1,\Delta_3^2+\Delta_2)$-branch.
a $(\omega_3^1+\tilde \Delta_3^1+1,\omega_3^1+\Delta_2+min\{\omega_2,\tilde \Delta_m^1\})$-branch.
\end{itemize}
Note that from this point on, for all $u,v \in V(T)$ there is no $z \in V\setminus V(T)$ with $\{u,z\},\{z,v\}\in E$.\vspace*{2ex}

\item[\bf 4.(b)] As the previous case does not apply, the other neighbor $c$ of $b$ has $d_T(c)=0$, and $d_G(c)\ge 2$ ({\bf Pending}), see Figure~\ref{branch1.5}. Additionally, observe that we must have $d_G(c)=3$ ({\bf ConsDeg2}) and that $d_P(c)=0$ due to {\bf Special}. We consider two subcases.
\begin{itemize}
\item[$I)$] $d_T(a)=1$. When we add $\{a,b\}$ to $F$, then $\{b,c\}$ is also added due to {\bf Deg2}. The reduction is at least $\tilde \Delta_3^1$
from $a$, $\omega_2$
 from $b$ and $\Delta_3^0$ from $c$. When $\{a,b\}$ is deleted, $\{b,c\}$ becomes a pt-edge. There is $\{a,z\} \in E \setminus E(T)$ with $z \neq b$,
which is subject to a {\bf Deg2} reduction rule. We get at least $\omega_3^1$ from $a$, $\omega_2$ from $b$, $\Delta_3^0$ from $c$ and
$\min\{\omega_2,\tilde \Delta_m^1\}$ from $z$. This is a $(\tilde \Delta_3^1+\Delta_3^0+\omega_2,\omega_3^1+\Delta_3^0+\omega_2+\min\{\omega_2,\tilde
\Delta_m^1\})$-branch. 
\item[$II)$] $d_T(a)=2$. Similarly, we obtain a $(\Delta_3^{2\ast}+\omega_2+\Delta_3^0,\Delta_3^{2\ast}+\omega_2+\Delta_3^0)$-branch.
\end{itemize}
%Secondly, suppose $d_P(c)=1$. Then $d_T(a)=1$ due to {\bf Special}. Note that in case of deletion of $\{a,b\}$ we produce a bridge. Thus, this leads to a $(2\tilde \Delta_3^{1 }+\omega_2,2\omega_3^1+\omega_2)$-branch.\vspace*{1ex}
\item[\bf 4.(c)] In this case, $d_G(b)=3$ and there is one  pt-edge attached to $b$, see Figure~\ref{branch2}. 
Note that $d_T(a)=2$ can be ruled out due to {\bf Attach2}. Thus, $d_T(a)=1$. Let $z \neq b$ be such that $\{a,z\} \in E \setminus E(T)$. Due to the priorities, $d_G(z)=3$.  We distinguish between the cases where $c$ is incident to a pt-edge or not.
\begin{enumerate}
 \item $d_P(c)=0$. First suppose $d_G(c)=3$. Adding $\{a,b\}$ to $F$ allows a reduction of $2\Delta_3^1$ (due to case 4.(b) we can exclude
$\Delta_3^{1\ast}$). Deleting $\{a,b\}$ 
implies that we get a reduction from $a$ and $b$ of $2\omega_3^1$ ({\bf Deg2} and {\bf Pending}). As $\{a,z\}$ is added to $F$ we reduce $\mu (G)$ by
at least $\tilde \Delta_3^1$ as the state of $z$ changes. %Observe that $c \neq z$  due to the absence of triangles.
 Now due to {\bf Pending} and {\bf Deg1} we include $\{b,c\}$ and get $\Delta_3^0$ from $c$. We have at least a $(2\Delta_3^1,2\omega_3^1+\tilde\Delta_3^1+\Delta_3^0)$-branch.\\
If $d_G(c)=2$ we consider the two cases for $z$ also. These are $d_P(z)=1$ and $d_P(z)=0$. The first  entails $(\omega_3^1+
\Delta_3^{1\ast},2\omega_3^1+ \tilde \Delta_3^1+\omega_2+\tilde \Delta_m^2)$.
 Note that when we add $\{a,b\}$ we trigger {\bf Attach2}. The second is a $(\Delta_3^1+ \Delta_3^{1\ast},2\omega_3^1+\Delta_3^0+\omega_2+\tilde
\Delta_m^2)$-branch.
\item $d_P(c)=1$. Let $d \neq b$ be the other neighbor of $c$ that does not have degree $1$. When $\{a,b\}$ is added to $F$, $\{b,c\}$ is deleted by
{\bf Attach2} and $\{c,d\}$ becomes a pt-edge 
({\bf Pending} and {\bf Deg1}). The changes on $a$ incur a measure decrease of $\Delta_3^{1\ast}$ and those on $b,c$ a measure decrease of $2
\omega_3^1$.
When $\{a,b\}$ is deleted, $\{a,z\}$ is added to $F$ ({\bf Deg2}) and $\{c,d\}$ becomes a pt-edge by two applications of the {\bf Pending} and {\bf
Deg1} rules. Thus,
 the decrease of the measure is at least $3 \omega_3^1$ in this branch. In total, we have a
$(\Delta_3^{1\ast}+2\omega_3^1,3\omega_3^1)$-branch here.
\end{enumerate}
\item[\bf 4.(d)] Now, $d_G(b)=3$, $b$ is not incident to a pt-edge, and $d_T(a)=1$. See Figure~\ref{branch2}. There is also some $\{a,z\} \in E\setminus E(T)$ such that $z \ne b$.
Note that $d_T(z)=0$, $d_G(z)=3$ and $d_P(z)=0$. Otherwise either {\bf Cycle} or  cases 4.(b) or 4.(c) would have been triggered. 
From the addition of $\{a,b\}$ to $F$ we get $\Delta_3^1+\Delta_3^0$ and from its deletion $\omega_3^1$ (from $a$ via {\bf Deg2}), $\Delta_2$ (from
$b$) and at least $\Delta_3^0$ from $z$ and thus, a $(\Delta_3^1+\Delta_3^0,\omega_3^1+\Delta_2+\Delta_3^0)$-branch.

\item[\bf 5.] See Figure~\ref{branch3}. The algorithm branches in the following way: $1)$ Delete $\{a,b\}$, $2)$ add $\{a,b\},\{b,c\}$, and delete $\{b,x\}$, $3)$ add $\{a,b\},\{b,x\}$ and delete $\{b,c\}$. Due to %Lemma~\ref{noleaf} 
{\bf Deg2}, 
we can disregard the case when $b$ is a leaf. Due to Lemma~\ref{nodeg3} we also disregard the case when $b$ is a 3-vertex. Thus by branching in this manner we find at least one optimal solution.\\[0.5ex]
The reduction in the first branch is at least $\omega_3^2+\Delta_2$. We get an additional amount of $\omega_2$ if $d(x)=2$ or $d(c)=2$ from {\bf ConsDeg2}. In the second we have to consider also the vertices $c$ and $x$. There are exactly three situations for $h \in \{c,x\}$ $\alpha)$ $d_G(h)=2$, $\beta)$ $d_G(h)=3$, $d_P(h)=0$ and $\gamma)$ $d_G(h)=3$, $d_P(h)=1$. We will only analyze branch $2)$ as $3)$ is symmetric.  We first get a reduction of $\omega_3^2+1$ from $a$ and $b$.
We reduce $\mu$ due to deleting $\{b,x\}$ by: $\alpha)$ $\omega_2+\tilde \Delta_m^2$, $\beta)$ $\Delta_2$, $\gamma)$ $\omega_3^1+\tilde \Delta_m^2$. %In the cases $\beta)$ and $\gamma)$ we can reduce $\mu$ by an extra term $\Delta_3^1$ due to taking one additional edge $\{x,y\}$ into $E(T)$ due to {\bf Deg2} and the absence of triangles, i.e., $y\neq c$.\\
Next we examine the amount by which $\mu$ will be decreased by adding $\{b,c\}$ to $F$. We distinguish between the cases $\alpha,\beta$ and $\gamma$: $\alpha)$ $\omega_2+\tilde \Delta_m^2$, $\beta)$ $\Delta_3^0$, $\gamma)$ $\tilde \Delta_3^1$.\\
For $h \in\{c,x\}$ and $W \in \{\alpha,\beta,\gamma\}$ let $1_W^h$ be the indicator function which is set to one if we have situation $W$ at vertex $h$. Otherwise it is zero. Now the branching tuple can be stated the following way :\\
$(\omega_3^2+\Delta_2+(1_{\alpha}^x+1_\alpha^c)\cdot \omega_2,$
$\omega_3^2+1+1^x_\alpha \cdot (\omega_2+\tilde \Delta_m^2)+1^x_\beta \cdot \Delta_2 +1^x_\gamma \cdot(\omega_3^1+\tilde \Delta_m^2)+1^c_\alpha\cdot (\omega_2+ \tilde \Delta_m^2)+ 1^c_\beta \cdot \Delta_3^0+1^c_\gamma \cdot \tilde \Delta_3^1),$\\
$ \omega_3^2+1+1^c_\alpha \cdot (\omega_2+\tilde \Delta_m^2)+1^c_\beta \cdot \Delta_2 +1^c_\gamma \cdot(\omega_3^1+\tilde \Delta_m^2)+1^x_\alpha\cdot (\omega_2+\tilde \Delta_m^2) + 1^x_\beta \cdot \Delta_3^0+1^x_\gamma \cdot \tilde\Delta_3^1)$\\
The amount of $(1_{\alpha}^x+1_\alpha^c)\cdot \omega_2$ comes from possible applications of {\bf ConsDeg2}.
\end{enumerate}
%\end{enumerate}

%\end{enumerate}

\begin{figure}\centering
\psfrag{a}{$a$}
\psfrag{b}{$b$}
\psfrag{b}{$b$}
\psfrag{z}{$z$}
\psfrag{x}{$x$}
\psfrag{y}{$y$}
\subfigure[\label{branch1}]{\includegraphics[scale=\scl]{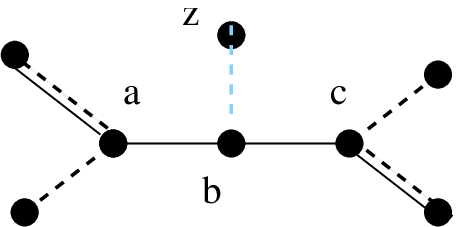}}\hspace*{0.75ex}
\subfigure[\label{branch1.5}]{\includegraphics[scale=\scl]{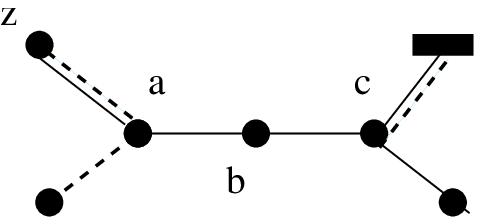}}\hspace*{0.75ex}
\subfigure[\label{branch2}]{\includegraphics[scale=\scl]{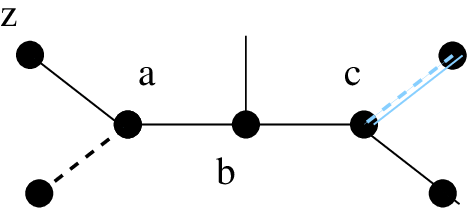}}\hspace*{0.75ex}
\subfigure[\label{branch3}]{\includegraphics[scale=\scl]{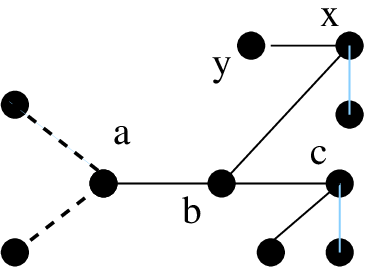}}
\caption{}
\label{branch}
\end{figure}

Observe that every instance created by branching is smaller than the original instance in terms of $\mu$. Together with Lemma~\ref{noinc} we see that every step of the algorithm only decreases $\mu$.
Now if we evaluate the upper bound for every given branching tuple for the given weights we can conclude that {\sc Max Internal Spanning Tree} can be
solved in time $\Oh^*(\rt^n)$ on subcubic graphs.
\qed
\end{proof}
}

\longversion{\subsection{A Parameterized Analysis of the Algorithm}}

For general graphs, the smallest known kernel has size $3k$. This can be easily improved to $2k$ for subcubic graphs.

\begin{lemma}\label{lem:kernel}
MIST on subcubic graphs has a $2k$-kernel.
%There is a polynomial time reduction for MIST on subcubic graphs, which for an instance $(G,k)$, either computes a solution or computes an instance ...
\end{lemma}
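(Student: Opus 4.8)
The plan is to avoid reduction rules entirely: the degree bound together with Proposition~\ref{maxt2} already forces every spanning tree of a subcubic graph to be rich in internal vertices, so any sufficiently large instance is automatically a \YES-instance. Accordingly, the kernelization algorithm reads as follows. Given $(G,k)$, first dispose of the trivial cases: if $G$ is disconnected it has no spanning tree at all, so return a fixed constant-size \NO-instance; and if $k$ is below a fixed absolute constant, decide $(G,k)$ in polynomial time and return a fixed trivial instance with the corresponding answer. Otherwise $G$ is connected and $k$ is not too small; put $n:=|V(G)|$. If $n\le 2k$, return $(G,k)$ unchanged — it is already an instance on at most $2k$ vertices. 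If $n> 2k$, return $(P_{k+2},k)$, where $P_{k+2}$ is the path on $k+2\le 2k$ vertices; it is subcubic and is itself a spanning tree with exactly $k$ internal vertices, hence a \YES-instance. All steps are polynomial, so the whole claim reduces to the correctness of the last case.

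So the crux is to show that \emph{every connected subcubic graph $G$ with $n>2k$ is a \YES-instance}, which I would establish by proving the stronger fact that \emph{every} spanning tree of $G$ has at least $k$ internal vertices. Let $T$ be any spanning tree of $G$ and abbreviate $t_i:=t_i^T$. Since $G$ is subcubic we have $d_T(v)\le 3$ for every $v$, so $t_i=0$ for all $i\ge 4$, and Proposition~\ref{maxt2} degenerates to $t_1=2+t_3$. Combined with $t_1+t_2+t_3=n$ this yields $t_2=n-2-2t_3$, and $t_2\ge 0$ forces $t_3\le (n-2)/2$. Consequently the number of internal vertices satisfies
\[
  |I(T)| = t_2+t_3 = n-2-t_3 \ge n-2-\frac{n-2}{2} = \frac{n-2}{2} > \frac{2k-2}{2} = k-1,
\]
so $|I(T)|\ge k$ by integrality, and $T$ witnesses that $(G,k)$ is a \YES-instance. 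In the remaining cases of the algorithm the returned instances are equivalent to $(G,k)$ by construction, which completes the proof.

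I do not expect a genuine obstacle in this argument; the slightly delicate points are only the bookkeeping of the degenerate cases (disconnected $G$, or $n$ or $k$ bounded by a constant), which are handled by brute force, and checking that $P_{k+2}$ really fits within the $2k$ bound. It is worth noting \emph{why} no reduction rules are needed here whereas the general-graph bound is only $3k$: for arbitrary graphs Proposition~\ref{maxt2} carries the extra term $\sum_{i\ge 3}(i-2)t_i$, which can be of order $n$ while $|I(T)|$ stays constant (for instance in a star), so there one genuinely has to preprocess; the maximum-degree-$3$ assumption kills exactly that slack. Equivalently, the argument above amounts to the structural statement that every connected subcubic graph on $n$ vertices has a spanning tree — in fact every spanning tree — with at least $\lceil (n-2)/2\rceil$ internal vertices, from which the $2k$-kernel is immediate.
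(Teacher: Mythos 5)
Your argument is correct and is essentially the paper's own proof in contrapositive form: the paper computes an arbitrary spanning tree $T$, notes that if $|I(T)|<k$ then Proposition~\ref{maxt2} (which for subcubic graphs reads $t_1^T=2+t_3^T$) forces $|V|\le 2k$, whereas you run the same count forward to show every spanning tree of a connected subcubic graph on $n>2k$ vertices has at least $k$ internal vertices. The extra bookkeeping (trivial instances, $P_{k+2}$, degenerate cases) is harmless and just makes explicit what the paper leaves implicit.
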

\longversion{
 \begin{proof}
 Compute an arbitrary spanning tree $T$. If it has at least $k$ inner vertices, answer Yes. Otherwise, $t^T_3 +t^T_2 < k$. Then, by Proposition~\ref{maxt2}, $ t^T_1 < k+2$. Thus, $|V| \le 2k$. \qed
 \end{proof}
}

Applying the algorithm of Theorem~\ref{thm:exalg} on this kernel for subcubic graphs shows the following result.

\begin{corollary}
Deciding whether a subcubic graph has a spanning tree with at least $k$ internal vertices can be done in time $\rtk^k n^{\Oh(1)}$.
\end{corollary}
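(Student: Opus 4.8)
The plan is to combine the kernelization of Lemma~\ref{lem:kernel} with the exact algorithm of Theorem~\ref{thm:exalg}; no new idea is needed. Given a subcubic graph $G$ and an integer $k$, I would first run the $2k$-kernelization: compute an arbitrary spanning tree $T$ of $G$ in polynomial time. If $|I(T)| \ge k$, answer \YES immediately. Otherwise $t_3^T + t_2^T < k$, so by Proposition~\ref{maxt2} we get $t_1^T < k+2$, hence $|V(G)| \le 2k$. Thus $G$ itself is already an equivalent instance on at most $2k$ vertices, and it is still subcubic (we have not modified it at all).

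Next I would invoke the algorithm of Theorem~\ref{thm:exalg} on this reduced instance: it computes a maximum internal spanning tree $T_o$ of the subcubic $n'$-vertex graph in time $\Oh^*(\rt^{n'})$, and we answer \YES if and only if $|I(T_o)| \ge k$. Since $n' \le 2k$, the running time of this step is $\Oh^*(\rt^{2k})$. Rewriting $\rt^{2k} = (\rt^2)^k$ and checking that $\rt^2 \le \rtk$ (indeed $1.8669^2 = 3.4853\ldots \le 3.4854$), while folding the polynomial overhead of the kernelization together with the polynomial factor hidden in $\Oh^*$ into a single $n^{\Oh(1)}$ term, gives the claimed running time of $\rtk^k\, n^{\Oh(1)}$.

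I do not expect a genuine obstacle here: the result is essentially an immediate corollary. The only points worth verifying are (i) that the kernel preserves the subcubic property, which is trivial since the kernel is an induced subgraph of $G$ (in fact $G$ itself), so that Theorem~\ref{thm:exalg} is applicable; and (ii) that the decision version is answered correctly, which is immediate because the optimization algorithm returns an optimal spanning tree whose number of internal vertices we simply compare against $k$.
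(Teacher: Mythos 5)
Your proposal is correct and matches the paper's own argument exactly: apply the exact $\Oh^*(\rt^{n})$ algorithm of Theorem~\ref{thm:exalg} to the $2k$-vertex kernel of Lemma~\ref{lem:kernel}, giving $\rt^{2k}=(\rt^2)^k\le\rtk^k$ up to a polynomial factor. Nothing further is needed.
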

%\longversion{\begin{proof}
%Compute an arbitrary spanning tree. If it has at least $k$ inner vertices we are done. Otherwise, $t_3 +t_2\le k$. Due to this and by Proposition~\ref{maxt2}, $ t_1 \le k+2$. Thus, $|V| \le 2k+2$ and the algorithm of Theorem~\ref{thm:exalg} has the claimed running time. \qed
%\end{proof}
%}

However, we can achieve a faster parameterized running time by applying a Measure \& Conquer analysis which is customized to the parameter $k$. We would like to put forward that our use of the technique of Measure \& Conquer for a parameterized algorithm analysis goes beyond previous work as our measure is not restricted to differ from the parameter $k$ by just a constant. We first demonstrate our idea with a simple analysis.

\begin{theorem}
Deciding whether a subcubic graph has a spanning tree with at least $k$ internal vertices can be done in time $\rtki^k n^{\Oh(1)}$.
\end{theorem}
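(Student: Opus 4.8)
The plan is to re-analyze the \emph{same} branching algorithm of Theorem~\ref{thm:exalg}, but with a measure $\kappa$ tailored to the parameter instead of to $n$. I would set $\kappa(G,F) = k - \sum_{v\in V} g(v)$, where $g$ is a weight function with $0 \le g(v) \le 1$ that estimates how much $v$ is going to contribute to the final count of internal vertices: $g(v) = 1$ whenever $v$ is already \emph{secured} as internal (i.e.\ $d_T(v)\ge 2$, or $v$ is incident to a pt-edge, since such a $v$ is internal in every spanning tree extending $F$), $g(v)=0$ whenever $v$ is a confirmed leaf ($d_G(v)\le 1$), and a carefully chosen fractional value on the remaining vertices, partitioned -- exactly as in the exact analysis -- according to their $G$-degree, their $T$-degree, and the degree of their unique non-tree neighbour (the sets $D_2$, $D_3^0$, $D_3^1$, $D_3^2$, $D_3^{2\ast}$). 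Two boundary properties are needed: (i) $\kappa\le k$ at the start, which is immediate because $g\ge 0$ and the initial star already contributes at least $1$; and (ii) $\kappa(G,F)\le 0$ implies that $F$ extends to a spanning tree with at least $k$ internal vertices, so the current branch may report \YES.

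Property (ii) is, I expect, the real heart of the proof and the main obstacle. It amounts to a structural lemma about reduced instances: whenever $\sum_v g(v)\ge k$, there is a spanning tree $T'\supseteq F$ with $|I(T')|\ge \sum_v g(v)$. I would prove it constructively, starting from $T$ together with the pt-edges, adding at each fractionally-weighted vertex enough of its pending non-tree edges to turn it internal, and then completing arbitrarily to a spanning tree; since the rule {\bf Cycle} has been applied exhaustively, every non-tree edge incident to $V(T)$ leaves $V(T)$, which keeps these additions cycle-free, and the remaining care is to show that the competing demands of nearby fractional vertices can be met simultaneously (a local matching/alternating-path argument on the reduced subcubic structure, together with Proposition~\ref{maxt2} to control how many leaves any completion is forced to have). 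The fractional weights must be chosen conservatively enough for this to go through, which is exactly the constraint that pins them down jointly with the branching recurrences.

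With $\kappa$ in hand, the remaining work mirrors the proof of Theorem~\ref{thm:exalg}. First I would re-prove the analogue of Lemma~\ref{noinc}: none of the reduction rules increases $\kappa$ for the new weights (adding an edge to $F$ only promotes vertices towards ``secured'', and deleting an edge only moves a vertex into a class of no larger weight -- the one delicate point being again that a deletion can shift a vertex into the $D_3^{2\ast}$-type class). Then, for each branching case 4.(a)--4.(d) and case~5 I would recompute, with the new weights, the amount by which $\kappa$ drops in every child call. The key qualitative difference from the exact analysis is that a pure edge deletion no longer pays off through ``a degree-$2$ vertex becoming cheaper'' but through the reduction rules it triggers (e.g.\ {\bf Bridge}/{\bf Deg1}/{\bf Deg2} promoting some vertex to secured-internal status, or a pt-edge being created), so several of the ``delete'' sides of the branchings have to be re-examined from scratch. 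This produces a finite list of branching vectors.

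Finally I would verify that every branching vector in that list has branching number at most $\rtki$; combined with the boundary conditions (i) and (ii), this gives a recursion tree of size $\rtki^{\kappa}\, n^{\Oh(1)}\le \rtki^{k}\, n^{\Oh(1)}$ (running on the $2k$-kernel of Lemma~\ref{lem:kernel} if one wants the polynomial factor in $k$), each node handled in polynomial time. Choosing the weights is thus an optimization problem: minimize the worst branching number subject to the inequalities imposed by the reduction rules and by the realizability lemma (ii); the value $\rtki$ is the optimum of the \emph{simple} version of this program, and a finer choice of weight classes is what will later bring the base down to $\rtkii$.
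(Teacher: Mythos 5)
Your overall strategy (re-analyse the same algorithm with a measure of the form $k$ minus a weighted vertex count, check that the reduction rules do not increase it, and re-derive the branching vectors) is indeed the paper's strategy, but the measure you propose deviates from the paper's in a way that opens a genuine gap. The paper's measure for this theorem is simply $\kappa = k - \omega|X| - |Y|$ with $X=\{v\mid d_G(v)=3,\ d_T(v)=2\}$, $Y=\{v\mid d_G(v)=d_T(v)\ge 2\}$ and $\omega=0.45346<1$; \emph{only} vertices that are already internal in the current tree $T$ receive positive weight, so the boundary condition ``$\kappa\le 0$ implies a spanning tree with $k$ internal vertices'' is immediate, and your realizability lemma (ii) --- which you yourself identify as the heart and the main obstacle of your plan --- is not needed at all. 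By instead hard-coding $g(v)=1$ for every vertex with $d_T(v)\ge 2$, you destroy the paper's main source of measure decrease: in the branches where $d_T(a)=2$ (case 4 with $d_G(b)=2$, and the delete-branch of case 5) the only change on the tree side is that $a$ goes from ``internal with one free edge'' to ``internal with no free edge'', which the paper charges as $1-\omega>0$ and which is $0$ under your weights. You are then forced to recover the decrease from the fractional weights you place on $D_2$, $D_3^0$ and $D_3^1$, i.e.\ on vertices that are \emph{not} secured as internal --- and that is exactly what makes your lemma (ii) unavoidable in your setup.

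That lemma, in the generality you need it (that $\sum_v g(v)\ge k$ implies a spanning tree extending $F$ with at least $\sum_v g(v)$ internal vertices, with positive weight on degree-$3$ vertices having $d_T\le 1$), is not established by your sketch and is genuinely delicate: a degree-$3$ vertex with $d_T(v)\in\{0,1\}$ can be forced to be a leaf in \emph{every} spanning tree extending $F$, so any positive weight on $D_3^0\cup D_3^1$ must be justified by a global counting argument, not by locally ``adding enough pending non-tree edges to turn it internal''. It is telling that even the paper's refined $\rtkii^k$ analysis, which does prove such a realizability lemma, only puts fractional weight on pt-incident vertices and on certain degree-$2$ vertices far from $X\cup Y$ (and gives pt-incident vertices weight $\omega_3<1$ precisely so that the charging argument closes); it still assigns weight $0$ to all unsecured degree-$3$ vertices. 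So either drop the fractional weights on $D_2$, $D_3^0$, $D_3^1$ and allow $g<1$ on $X$ --- which is the paper's proof --- or supply a complete proof of (ii) for your classes; as written, the argument does not go through.
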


\begin{proof}
 Consider the algorithm described earlier, with the only modification that the parameter $k$ is adjusted whenever necessary (for example, when two pt-edges incident to the same vertex are removed), and that the algorithm stops and answers Yes whenever $T$ has at least $k$ internal vertices. Note that the assumption that $G$ has no Hamiltonian path can still be made due to the $2k$-kernel of Lemma~\ref{lem:kernel}: the running time of the Hamiltonian path algorithm is $1.251^{2k}n^{\Oh(1)}=1.5651^k n^{\Oh(1)}$.
 The running time analysis of our algorithm relies on the following measure:
 \[
  \kappa := \kappa(G,F,k) := k - \omega \cdot |X| - |Y|,
 \]
 where $X:=\{v\in V \mid d_G(v)=3, d_T(v)=2\}$, $Y:=\{v\in V \mid d_G(v)=d_T(v)\ge 2\}$ and $0 \le \omega \le 1$. Let $U:=V\setminus (X \cup Y)$.
 Note that a vertex which has already been decided to be internal, but that still has an incident edge in $E \setminus T$, contributes a weight of
$1-\omega$ to the measure.
Or equivalently, such a vertex has  been only counted by a fraction of $\omega$.
 
 None of the reduction and branching rules increases $\kappa$ and we have that $0 \le \kappa \le k$ at any time of the execution of the algorithm.
 
\longversion{
 In step 4, whenever the algorithm branches on an edge $\{a,b\}$ such that $d_T(a)=1$ (w.l.o.g., we assume  \longversion{that }$a\in V(T)$), the measure decreases by at least $\omega$ in one branch, and by at least $1$ in the other branch. We speak of a $(\omega,1)$-branch. To see this, it suffices to look at vertex $a$. Due to \textbf{Deg2}, $d_G(a)=3$. When $\{a,b\}$ is added to $F$, vertex $a$ moves from the set $U$ to the set $X$. When $\{a,b\}$ is removed from $G$, a subsequent application of the {\bf Deg2} rule adds the other edge incident to $a$ to $F$, and thus, $a$ moves from $U$ to~$Y$.
 
 Still in step 4, let us consider the case where $d_T(a)=2$. Then condition (b) ($d_G(b)=2$) of step 4 must hold, due to the preference of the reduction and branching rules: condition (a) is excluded due to reduction rule \textbf{Attach}, (c) is excluded due to \textbf{Attach2} and (d) is excluded due to its condition that $d_T(a)=1$. When $\{a,b\}$ is added to $F$, the other edge incident to $b$ is also added to $F$ by a subsequent {\bf Deg2} rule. Thus, $a$ moves from $X$ to $Y$ and $b$ from $U$ to $Y$ for a measure decrease of $(1-\omega)+1=2-\omega$. When $\{a,b\}$ is removed from $G$, $a$ moves from $X$ to $Y$ for a measure decrease of $1-\omega$. Thus, we have a $(2-\omega,1-\omega)$-branch.
 
 In step 5, $d_T(a)=2$, $d_G(b)=3$, and $d_F(b)=0$. Vertex $a$ moves from $X$ to $Y$ in each branch and $b$ moves from $U$ to $Y$ in the two latter branches. In total we have a $(1-\omega,2-\omega,2-\omega)$-branch.
}

 By \shortversion{a simple case analysis, }setting $\omega = 0.45346$ and evaluating the branching factors, the proof follows.
\qed
\end{proof}

This analysis can be improved by also measuring the vertices of degree $2$ that are not adjacent to vertices of $X \cup Y$ and the vertices incident
to pt-edges differently.
 \shortversion{For reasons of space, we only exhibit the measure we used.}

\begin{theorem}\label{thm:rtkii}
Deciding whether a subcubic graph has a spanning tree with at least $k$ internal vertices can be done in time $\rtkii^k n^{\Oh(1)}$.
\end{theorem}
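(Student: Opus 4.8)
The plan is to analyse the \emph{same} branching algorithm as in Theorem~\ref{thm:exalg}, run on a subcubic instance, with the two modifications already used in the previous theorem: the parameter $k$ is decremented whenever a reduction rule permanently settles an internal vertex (for instance two pt-edges at a common vertex are removed, or a chain of degree-$2$ vertices is contracted by \textbf{ConsDeg2}), and the algorithm halts with answer \YES{} as soon as $T$ has at least $k$ internal vertices. The Hamiltonian-path case may again be ignored: by the $2k$-kernel of Lemma~\ref{lem:kernel} together with the $\Oh^*(1.251^n)$ subcubic Hamiltonian-path algorithm, that case costs only $1.251^{2k}n^{\Oh(1)}$. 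The new ingredient is a refined measure. Keeping $X=\{v\mid d_G(v)=3,\ d_T(v)=2\}$ and $Y=\{v\mid d_G(v)=d_T(v)\ge 2\}$ as before, let $A$ be the set of vertices $v$ with $d_G(v)=2$, $d_T(v)=0$ that have no neighbour in $X\cup Y$, and let $B$ be the set of vertices incident to a pt-edge; after the reduction rules are applied exhaustively these four sets are pairwise disjoint (in particular \textbf{Pending} forbids a degree-$2$ vertex from carrying a pt-edge). The measure is
\[
 \kappa := \kappa(G,F,k) := k - \omega_X\cdot|X| - |Y| - \omega_A\cdot|A| - \omega_B\cdot|B|,
\]
with constants $0<\omega_A,\omega_B\le\omega_X\le1$ to be fixed at the end. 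Morally, $A$- and $B$-vertices receive a partial credit toward $k$ because they are (essentially) forced to become internal in any completion of $F$, and it is this pre-credit that lets the branching vectors improve over the coarse analysis.

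The first task is to re-establish the invariant $0\le\kappa\le k$ and to show that no reduction rule increases $\kappa$ (the analogue of Lemma~\ref{noinc}). The bound $\kappa\le k$ is immediate; for $\kappa\ge0$ one argues that in any extension of $F$ to a spanning tree the vertices of $X\cup Y\cup B$ are internal and, using that the leaves of an optimal tree are independent (Lemma~\ref{ind}), that the $A$-vertices cannot all avoid being internal, so the total credit never exceeds the number of internal vertices, which is $\le k$ while the algorithm has not yet stopped. For the rules: those adding edges to $F$ (\textbf{Bridge}, \textbf{Deg1}, \textbf{Deg2}, \textbf{Special}) only move vertices ``forward'' ($U\to X$, $X\to Y$, $B$ toward $Y$, and so on), which with $\omega_X,\omega_A,\omega_B\le1$ cannot raise $\kappa$ — the one subtlety being that a vertex entering $X\cup Y$ may strip its (at most two) degree-$2$ neighbours of their $A$-membership, so one checks the gain dominates the $\le 2\omega_A$ that is lost; \textbf{ConsDeg2} removes a degree-$2$ vertex and decrements $k$, decreasing $\kappa$; \textbf{Pending}, \textbf{Attach}, \textbf{Attach2} delete edges incident to $V(T)$ and again only push vertices forward; \textbf{Cycle} deletes an edge disjoint from $T$ and is harmless.

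The heart of the proof is the branching analysis, carried out case by case exactly along cases 4.(a)--4.(d) and 5 of the algorithm. Relative to the analysis of Theorem~\ref{thm:exalg} the gain is that many of these branches create fresh degree-$2$ vertices (e.g.\ through the \textbf{ConsDeg2}-style merges triggered in case 5, or because a neighbour of $b$ drops to degree $2$ once an incident edge is deleted) or fresh pt-edges, and each such new $A$- or $B$-vertex releases an additional $\omega_A$ resp.\ $\omega_B$ in that branch. For each sub-configuration I would write down the branching vector in the unknowns $\omega_X,\omega_A,\omega_B$, counting conservatively: $+\omega_X$ for a vertex entering $X$, $+1$ (with $k$ decremented) for one entering $Y$, $+(1-\omega_X)$ for an $X\to Y$ step, $+\omega_A,+\omega_B$ for newly created $A$- resp.\ $B$-vertices, the appropriate $+(1-\omega_A)$-type amount when an $A$-vertex is upgraded, and a pessimistic $-2\omega_A$ whenever a vertex moving into $X\cup Y$ may cost its degree-$2$ neighbours their $A$-credit. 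Taking, over the finitely many configurations, the worst resulting vector gives a finite system of inequalities $\sum_j x^{-t_{ij}}\le1$.

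The main obstacle is exactly this bookkeeping, and its non-local character: the $A$-credit depends on a vertex's neighbourhood rather than on its own state alone, so a step that is locally favourable (a vertex moving into $X$) can simultaneously destroy up to two $A$-credits, and in every branch one must verify that the net change of $\kappa$ is still a strict and sufficiently large decrease; arranging the definition of $A$ and the preference order of the reduction and branching rules so that this always holds is the delicate part. Once the branching vectors are pinned down, the rest is a routine numerical optimisation: choose $\omega_X$ (close to the value used in the simpler analysis) together with suitably smaller $\omega_A,\omega_B$ minimising the largest branching factor, and verify that it is at most $\rtkii$. Combined with $0\le\kappa\le k$ and the reduction-rule monotonicity, this yields a search tree with at most $\rtkii^{k}$ leaves, each handled in polynomial time, hence the claimed $\rtkii^{k}n^{\Oh(1)}$ running time.
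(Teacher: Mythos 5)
Your overall plan coincides with the paper's: the same algorithm, the same kernel-based dismissal of the Hamiltonian-path case, essentially the same measure (your $A$ and $B$ are the paper's $Z$ and $W$), followed by reduction-rule monotonicity, per-case branching vectors, and a numerical optimisation. Two concrete issues remain, one minor and one substantial. The minor one: your a priori constraint $\omega_B\le\omega_X$ is violated by the weights that actually achieve the bound (the paper uses $0.5485$ for the $X$-class but $0.7712$ for the pt-edge class), so enforcing it would yield a worse constant than $\rtkii$; relatedly, the status change of a $B$-vertex that acquires a second tree edge and moves into $X$ locally \emph{increases} the measure and must be paid for by accompanying changes, which your blanket claim that forward moves cannot raise $\kappa$ glosses over.

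The substantial gap is the justification of the stopping criterion ($\kappa\le 0$ implies a YES-instance, equivalently $\kappa\ge 0$ throughout). It does not suffice that ``the $A$-vertices cannot all avoid being internal'': you must exhibit a spanning tree $T^*\supseteq F$ with at least $|X|+|Y|+\omega_A|A|+\omega_B|B|$ internal vertices, i.e.\ show that the fractional credit of every $A$-vertex that ends up a leaf of $T^*$ can be redistributed to internal vertices not already counted, without overcharging any of them. The paper does this by constructing $T^*$ greedily and then performing a local exchange so that every $A$-leaf whose neighbour is also a leaf has its other neighbour of $T^*$-degree $2$; this structural fact bounds the charge any single internal vertex can receive (at most $2\omega_A$, or $\omega_A/2$ on top of $\omega_B$), which is precisely where the numerical constraints $2\omega_A\le 1$ and $\omega_B+\omega_A/2\le 1$ enter. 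Lemma~\ref{ind} alone does not give you this, since it speaks about an optimal tree, whereas $T^*$ must contain $F$ and need not be optimal. Without this charging argument $\omega_A$ is unconstrained from above and the claimed running time is unsupported.
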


 The proof of this theorem follows the same lines as the previous one, except that we consider a more detailed measure:
 \[
  \kappa := \kappa(G,F,k) := k - \omega_1 \cdot |X| - |Y| - \omega_2 |Z| - \omega_3 |W|,
 \]
 where
 \begin{itemize}
  \item $X:=\{v\in V \mid d_G(v)=3, d_T(v)=2\}$ is the set of vertices of degree $3$ that are incident to exactly 2 edges of $T$,
  \item $Y:=\{v\in V \mid d_G(v)=d_T(v)\ge 2\}$ is the set of vertices of degree at least 2 that are incident to only edges of $T$,
  \item $W:=\{v\in V\setminus (X \cup Y) \mid d_G(v) \ge 2, \exists u\in N(v) \text{ st. } d_G(u)=d_F(u)=1\}$
is the set of vertices of degree at least $2$ that have an incident pt-edge, and
  \item $Z:=\{v\in V\setminus W \mid d_G(v)=2, N[v] \cap (X \cup Y) = \emptyset\}$ is the set of degree $2$ vertices that do not have a vertex of $X\cup Y$ in their closed neighborhood, and are not incident to a pt-edge.  
\end{itemize}
 We immediately set $\omega_1 :=  0.5485, \omega_2 :=  0.4189$ and $\omega_3 := 0.7712$. 
Let $U:=V\setminus (X \cup Y \cup Z \cup W)$.
We first have to show that the algorithm can be stopped whenever the measure drops to $0$ or less.
\begin{lemma}
 Let $G=(V,E)$ be a connected graph, $k$ be an integer and $F\subseteq E$ be a set of edges that can be partitioned into a tree $T$ and a set of pending edges $P$. If none of the reduction rules applies to this instance and $\kappa(G,F,k) \le 0$, then $G$ has a spanning tree $T^* \supseteq F$ with at least $k$ internal nodes.
\end{lemma}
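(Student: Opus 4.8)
The plan is to prove that $F$ can always be completed to the desired spanning tree. The hypothesis $\kappa(G,F,k)\le 0$ reads $k\le \omega_1|X|+|Y|+\omega_2|Z|+\omega_3|W|$. I would first observe that in \emph{every} spanning tree $T^*\supseteq F$ the vertices of $X\cup Y\cup W$ are internal: if $v\in Y$ then $d_{T^*}(v)=d_G(v)\ge 2$; if $v\in X$ then $d_{T^*}(v)\ge d_T(v)=2$; and if $v\in W$ then $v$ carries a pt-edge to a degree-$1$ vertex $u$, so $u$ reaches the rest of the (connected, spanning) tree only through $v$, forcing $d_{T^*}(v)\ge 2$. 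Hence any completion satisfies $|I(T^*)|\ge |X|+|Y|+|W|+|Z|-|Z\setminus I(T^*)|$, and, feeding this into the displayed inequality, it suffices to construct one completion with $|Z\setminus I(T^*)|\le (1-\omega_1)|X|+(1-\omega_2)|Z|+(1-\omega_3)|W|$.

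Next I would record the structure of $Z$ forced by the reduction rules. For $v\in Z$ we have $d_G(v)=2$, $d_T(v)\ne 2$ (else $v\in Y$) and $d_T(v)\ne 1$ (else \textbf{Deg2} fires on the non-tree edge incident to $v$); since $v\notin W$ it carries no pt-edge, so $d_F(v)=0$. The set $Z$ is independent, since two adjacent vertices of $Z$ would serve as centre and endpoint of the three non-tree edges required by \textbf{ConsDeg2}; and for the same reason every neighbour $w$ of a $v\in Z$ has $d_G(w)=3$, while $w\notin X\cup Y$ because $N[v]\cap(X\cup Y)=\emptyset$. Thus each neighbour of a $Z$-vertex is a degree-$3$ vertex in $W\cup U$ with $d_T(w)\le 1$, and is adjacent to at most two other vertices of $Z$ — at most one if $d_T(w)=1$.

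For the completion itself I would suppress all of $Z$: form $G^*$ by replacing each path $w_1\!-\!v\!-\!w_2$ ($v\in Z$) with an edge $e_v=\{w_1,w_2\}$. Because $Z$ is independent and $d_F(Z)=0$, this is well defined, $F$ is still a forest in $G^*$, and $G^*$ is connected; let $E_Z=\{e_v:v\in Z\}$. Choose $S\subseteq E_Z$ of maximum size with $F\cup S$ acyclic (greedily, in the graphic matroid of $G^*$ with the forest $F$ contracted), extend $F\cup S$ to a spanning tree $T^\circ$ of $G^*$, and lift it back to a spanning tree $T^*$ of $G$: for $v\in Z$ with $e_v\in T^\circ$ put both edges incident to $v$ into $T^*$, and for the remaining $v$ put just one. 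Then $F\subseteq T^*$, $X\cup Y\cup W\subseteq I(T^*)$, and $|Z\setminus I(T^*)|=|E_Z|-|S|$, which equals the number of independent cycles of $F\cup E_Z$ inside $G^*$ (the cyclomatic number of $F\cup E_Z$, since $F$ is acyclic).

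The remaining, and clearly hardest, step is to bound that cyclomatic number by $(1-\omega_1)|X|+(1-\omega_2)|Z|+(1-\omega_3)|W|$. Every independent cycle it counts corresponds in $G$ to a cycle using tree edges and edges incident to vertices of $Z$, so it passes through degree-$3$ vertices of $N(Z)$, each of which lies in $U$ (and then contributes a full unit of ``slack'' not yet used by the estimate for $|I(T^*)|$) or in $W$ (contributing slack $1-\omega_3$). The plan is a charging/discharging argument: assign each unsaved vertex of $Z$ to such a neighbour, or to the slack $1-\omega_2$ of the $Z$-vertices themselves, using the bounded overlap of $Z$-neighbourhoods and the tree-degree constraint $d_T(w)\le 1$ established above to keep every account within its budget, and handling cycles that pass through several $Z$-vertices separately. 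Checking that this balances for the concrete values $\omega_1=0.5485$, $\omega_2=0.4189$, $\omega_3=0.7712$ is the sole numerical ingredient, and it is precisely this inequality — not the tree surgery of the previous paragraph — that dictates the choice of these weights; I expect the bookkeeping of shared neighbours and of multi-$Z$ cycles to be the genuinely technical part.
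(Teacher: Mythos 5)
There is a genuine gap, in two parts. First, the inequality you reduce to, $|Z\setminus I(T^*)|\le(1-\omega_1)|X|+(1-\omega_2)|Z|+(1-\omega_3)|W|$, is not achievable in general, because your lower bound $|I(T^*)|\ge|X|+|Y|+|W|+|Z\cap I(T^*)|$ discards the internal vertices that lie in $U$, and these are indispensable. Concretely, take $V=\{p_1,p_2,p_3,v_1,v_2\}$, let $F=T$ be the path $p_1-p_2-p_3$, and let each of $v_1,v_2$ be adjacent to both $p_1$ and $p_3$. No reduction rule applies, $X=W=\emptyset$, $Y=\{p_2\}$, $Z=\{v_1,v_2\}$, yet in \emph{every} spanning tree containing $F$ both $v_1$ and $v_2$ are leaves (making either internal closes a cycle through $T$), so $|Z\setminus I(T^*)|=2>2(1-\omega_2)\approx 1.16$. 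The lemma survives only because $p_1,p_3\in U$ become internal; the paper's proof captures exactly this contribution by introducing the set $P$ of internal neighbours of the leaf vertices of $Z$ (noting $P\cap(X\cup Y)=\emptyset$) and proving $|P\cup W|\ge\omega_2|Z_\ell|+\omega_3|W|$. You mention the $U$-slack parenthetically, but it never enters your formal target, so the cyclomatic-number bound you set out to prove is simply false as stated.

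Second, and more importantly, the step that actually needs the weights---the charging argument---is announced but not carried out; you explicitly defer ``the genuinely technical part,'' which is where essentially all the content of the lemma lies. The paper's route is to build $T^*$ greedily and then perform local exchanges until no leaf $v\in Z$ has simultaneously a non-tree neighbour that is a leaf and a tree neighbour of $T^*$-degree $3$; this structural property is what lets the discharging over $P$ close, using $1-2\omega_2\ge 0$ and $1-\omega_3-\omega_2/2\ge 0$. Your matroid construction is sound and the identity $|Z\setminus I(T^*)|=|E_Z|-|S|=$ cyclomatic number of $F\cup E_Z$ is correct and rather elegant (and a matroid-optimal tree plausibly satisfies an analogous local-optimality property), but without establishing such a structural fact and then doing the case analysis on the neighbours of the unsaved $Z$-vertices---with the $U$-internal vertices credited explicitly---the proof is incomplete precisely at its core.
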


\longversion{
\begin{proof}
 Since the vertices in $X\cup Y$ are internal in any spanning tree containing $F$, it is sufficient to show that there exists a spanning tree $T^* \supseteq F$ that has at least $\omega_2 |Z|+\omega_3 |W|$ more internal vertices than $T$.
 
 The spanning tree $T^*$ is constructed as follows. Greedily add a subset of edges $A \subseteq E\setminus F$ to $F$ to obtain a spanning tree $T'$ of
$G$. While there exists $v\in Z$ with neighbors $u_1$ and $u_2$ such that $d_{T'}(v)=d_{T'}(u_1)=1$ and $d_{T'}(u_2)=3$, then set $A:=(A \setminus
\{v,u_2\}) \cup \{u_1,v\}$. This procedure finishes in polynomial time as the number of 
internal vertices increases each time such a vertex is found. Call the resulting spanning tree $T^*$.

 By connectivity of a spanning tree, we have:
\begin{fact}\label{fact1}
 If $v \in W$, then $v$ is internal in $T^*$.
\end{fact}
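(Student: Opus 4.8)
The plan is to read off Fact~\ref{fact1} from the definition of $W$ and the two defining properties of $T^*$: that it is a spanning tree of $G$ and that it contains $F$. Fix $v\in W$. By definition $d_G(v)\ge 2$, $v\notin X\cup Y$, and $v$ has a neighbour $u$ with $d_G(u)=d_F(u)=1$; since $d_G(u)=1$, the vertex $u$ is a leaf of $G$ and its unique edge $\{u,v\}$ lies in $F$.

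First I would confirm that $T^*\supseteq F$, so that in particular $\{u,v\}\in E(T^*)$. The construction of $T^*$ starts from $F$ and only ever touches edges of $A\subseteq E\setminus F$: a rerouting step modifies the two edges incident to some vertex $z\in Z$, and both of these lie outside $F$. Indeed, a degree-$2$ vertex in $V(T)$ must have both its edges in $T$ (otherwise \textbf{Deg2} would still apply) and hence lie in $Y$; as $z\notin Y$, the vertex $z$ has no incident edge of $T$, and as $z\notin W$ it has no incident pt-edge. Hence no $F$-edge is ever removed, so $T^*\supseteq F$.

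Now the connectivity argument announced in the text applies. Since $\{u,v\}\in E(T^*)$ and $d_G(u)=1$, the vertex $u$ is a leaf of the spanning tree $T^*$ and $\{u,v\}$ is its unique incident edge there. Using $d_G(v)\ge 2$, pick a neighbour $w\neq u$ of $v$ in $G$; as $T^*$ spans $G$, it contains a path from $v$ to $w$, and this path cannot pass through the leaf $u$, so it leaves $v$ along an edge other than $\{u,v\}$. Therefore $d_{T^*}(v)\ge 2$, i.e.\ $v$ is internal in $T^*$, which is exactly Fact~\ref{fact1}.

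Nothing here is deep; the only step needing a little care is checking that the greedy rerouting in the construction of $T^*$ does not disturb $F$ (and in particular the edge $\{u,v\}$), which is why I would spell out the consequence of having exhausted \textbf{Deg2} and the fact that $Z$ and $W$ are disjoint. Everything else follows at once from the connectivity of a spanning tree.
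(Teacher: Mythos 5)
Your argument is correct and matches the paper's, which simply asserts the fact ``by connectivity of a spanning tree'' and separately notes $F\subseteq T^*$; you have just spelled out both steps. (Incidentally, the detour through $F\subseteq T^*$ is not even needed here: since $\{u,v\}$ is the unique edge at the degree-$1$ vertex $u$, it is a bridge and lies in \emph{every} spanning tree, and the path in $T^*$ from $u$ to any other neighbour $w$ of $v$ forces $d_{T^*}(v)\ge 2$.)
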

 Note that $F \subseteq T^*$ as no vertex of $Z$ is incident to an edge of $F$. By the construction of $T^*$, we have the following.
\begin{fact}\label{fact3}
 If $u,v$ are two adjacent vertices in $G$ but not in $T^\ast$, such that $v\in Z$ and $u,v$ are leafs in $T^*$, then $v$'s other neighbor has
$T^*$-degree $2$.
\end{fact}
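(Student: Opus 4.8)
The plan is to read the claim off the termination condition of the local improvement step that defines $T^*$, together with the connectivity of $T^*$. First I would fix notation. Since $v\in Z$ we have $d_G(v)=2$, so $v$ has a second neighbour $w\neq u$, and $u,v,w$ are three distinct vertices; in particular $|V|\ge 3$. The goal is to show $d_{T^*}(w)=2$. Because $v$ is a leaf of $T^*$, exactly one of the two edges at $v$ lies in $T^*$, and since $\{u,v\}\notin T^*$ by hypothesis this edge must be $\{v,w\}$; hence $d_{T^*}(w)\ge 1$. As $G$ is subcubic we also have $d_{T^*}(w)\le d_G(w)\le 3$, so it suffices to exclude the values $1$ and $3$.

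To exclude $d_{T^*}(w)=3$, note that then $v$ — together with its neighbour $u$ of $T^*$-degree $1$ and its neighbour $w$ of $T^*$-degree $3$ joined to $v$ by the tree edge $\{v,w\}$ — satisfies precisely the precondition of the improvement step, which would delete $\{v,w\}$ and add $\{u,v\}$. This exchange is legitimate: deleting $\{v,w\}$ isolates the leaf $v$ and leaves $u$ in the other component, so $(T^*\setminus\{v,w\})\cup\{u,v\}$ is again a spanning tree. This contradicts the fact that $T^*$ admits no further application of the improvement step. To exclude $d_{T^*}(w)=1$, observe that then $v$ and $w$ would be two leaves of $T^*$ joined by the edge $\{v,w\}\in T^*$, so $\{v,w\}$ would be a connected component of $T^*$ on just two vertices — impossible, since $T^*$ is a spanning tree of the connected graph $G$ and $|V|\ge 3$. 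Hence $d_{T^*}(w)=2$.

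I do not anticipate a genuine difficulty here; the only step needing a little care is verifying that the improvement step's precondition is met verbatim, which is why I would establish $\{v,w\}\in T^*$ before invoking it: one needs $v\in Z$, that $v$ and one of its neighbours are leaves of $T^*$, that the third vertex has $T^*$-degree exactly $3$, and that it is adjacent to $v$ in $T^*$. The statement is thus in essence a reformulation of the stopping condition of the construction of $T^*$, and it is exactly what is later used to bound the number of leaves of $T^*$ inside $Z$ in the measure-based count.
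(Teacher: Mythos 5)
Your proof is correct and follows the same route the paper intends: the paper justifies this fact only with the phrase ``by the construction of $T^*$'', and your argument is precisely the unpacking of that construction --- $\{v,w\}\in T^*$ because $v\in Z$ has degree $2$ and $\{u,v\}\notin T^*$, the value $d_{T^*}(w)=3$ is excluded by the termination condition of the improvement loop (whose swap you correctly verify preserves the spanning-tree property), and $d_{T^*}(w)=1$ is excluded by connectivity.
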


Let $Z_\ell \subseteq Z$ be the subset of vertices of $Z$ that are leafs in $T^*$ and let $Z_i := Z \setminus Z_\ell$. As $F \subseteq T^*$ and by
Fact~\ref{fact1}, 
all vertices of $X\cup Y\cup W\cup Z_i$ are internal in $T^*$.
% For the vertices of $Z_\ell$, we use  a potential of $1$ to their neighbors that are internal in $T^*$.
Let $P$ denote the subset of vertices of $N(Z_\ell)$ that are internal in $T^*$. As $P$ might intersect with $W$ and for $u,v\in Z_\ell$, $N(u)$ 
and $N(v)$ might intersect (but $u \not \in N(v)$ because of \textbf{ConsDeg2}), we assign an initial potential of 1 to vertices of $P$. By
definition, $P \cap (X\cup Y) = \emptyset$. Thus the number of internal vertices in $T^*$ is at least $|X|+|Y|+|Z_i|+|P \cup W|$. To finish the proof
of the claim, we show that $|P \cup W|\ge \omega_2|Z_l|+\omega_3|W|$. %|P |P \cup W|\g w2|Z_l|+w3|W| \setminus W| \ge \omega_2 |Z_\ell|-\omega_3|W|$.
 
 Decrease the potential of each vertex in $P \cap W$ by $\omega_3$. Then, for each vertex $v \in Z_\ell$, decrease the potential of each vertex in
$P_v=N(v)\cap P$ 
by $\omega_2/|P_v|$. We show that the potential of each vertex in $P$ remains positive.
 Let $u \in P$ and $v_1 \in Z_\ell$ be a neighbor of $u$. Note that $d_{T^*}(v_1)=1$. We disinguish two cases based on $u$'s tree-degree 
in $T^*$. If $d_{T^*}(u)=2$, then $u \not \in W$, as $u$ being incident to a pt-edge would contradict the connectivity of $T^*$. Moreover, $u$ is
incident to at most $2$ vertices of $Z_\ell$ (again by connectivity of $T^*$), its potential remains thus positive as $1-2 \omega_2 \ge 0$. If
$d_{T^*}(u)=3$ and $u \in W$ is incident to a pt-edge, then it has one neighbor in $Z_\ell$ (connectivity of $T^*$), which has only internal neighbors
(by Fact~\ref{fact3}). The potential of $u$ is thus $1-\omega_3-\omega_2/2 \ge 0$. If $d_{T^*}(u)=3$ and $u \not \in W$, then $u$ has at most two
neighbors in $Z_\ell$, and both of them have only inner neighbors due to Fact~\ref{fact3}. %at most one of them has only one inner neighbor. 
As $1-2 \omega_2 / 2 \ge 0$, $u$'s potential remains positive.
\qed
\end{proof}
}

We also show that reducing an instance does not increase its measure.
\begin{lemma}
 Let $(G',F',k')$ be an instance resulting from the application of a reduction rule to an instance $(G,F,k)$. Then, $\kappa(G',F',k') \le \kappa(G,F,k)$.
\end{lemma}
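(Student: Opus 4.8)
The plan is to verify, reduction rule by reduction rule, that $\kappa$ never increases. For each rule I would track how the four vertex sets $X, Y, Z, W$ (and hence the four weighted terms plus the possible decrease of the parameter $k$ itself) change, and check the resulting inequality for the fixed weights $\omega_1 = 0.5485$, $\omega_2 = 0.4189$, $\omega_3 = 0.7712$, all of which lie in $(0,1)$. The rules \textbf{Cycle}, \textbf{Bridge}, \textbf{Deg1}, \textbf{Deg2}, \textbf{Special} add edges to $F$ (moving edges from $P$ to $T$ or growing $T$); the remaining rules (\textbf{Pending}, \textbf{ConsDeg2}, \textbf{Attach}, \textbf{Attach2}) delete edges or contract a path. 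Note that $k$ only changes in \textbf{Pending}, where two pt-edges at a vertex $v$ are removed: one of them becomes a bridge that a later \textbf{Deg1}/\textbf{Bridge} puts back, so $v$ is forced internal and $k$ may be decremented by $1$; I would account for this decrease against whatever weight $v$ was carrying before (at most $\omega_3$, since $v \in W$ or $v \in U$ before the rule), so $k - \omega_3 \cdot 1 \ge 0$-style bookkeeping shows no net increase.

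The key steps, in order, are: (i) observe that adding an edge to $T$ can only move a vertex "up" the chain $U \to X \to Y$ or $Z/W \to Y$, each such move strictly decreasing the weighted sum since $0 < \omega_1, \omega_2, \omega_3 < 1$ and $Y$-vertices carry full weight $1$ — so the five edge-adding rules are immediately safe once one checks no vertex is pushed from a heavier class to a lighter one (it never is, because tree-degree only rises and $d_G$ is unchanged); (ii) handle \textbf{ConsDeg2}, which replaces a path $v\!-\!w\!-\!z$ (with $d_G(w)=d_G(z)=2$, edges outside $T$) by a single edge $\{v,z\}$: here $w$ and $z$ leave the graph, $w$ contributed either $\omega_2$ (if $w \in Z$) or $0$, $z$ similarly, and $v$'s membership in $Z$ can only be affected favorably, so $\kappa$ drops or stays; (iii) handle the three edge-deleting rules \textbf{Attach}, \textbf{Attach2}, \textbf{Pending}, where the subtle point is that deleting an edge can make a new vertex enter $Z$ (a degree-$3$ vertex losing an edge becomes degree $2$) or enter $W$ — but any vertex newly entering $Z$ must previously have been in $U$ (weight $0$) or in a class of weight $\ge \omega_2$, and the deletion is always accompanied by some other weight gain (e.g. in \textbf{Attach} the vertex $u$ with $d_T(u)=2, d_G(u)=3$ becomes $d_G(u)=2$ and joins $Y$ if it was in... here one must be careful, since $u\in X$ before with weight $\omega_1$ and after the deletion $d_G(u)=d_T(u)=2$ so $u\in Y$ with weight $1$, a gain of $1-\omega_1$); (iv) conclude.

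The main obstacle I expect is bookkeeping the "ripple" effects of edge deletion in \textbf{Attach}, \textbf{Attach2}, and \textbf{ConsDeg2}: deleting or contracting an edge changes $d_G$ of up to three vertices at once, and a single degree drop from $3$ to $2$ can simultaneously (a) remove a vertex from $X$, (b) add it to $Y$ or to $Z$ depending on its tree-degree and its neighbors, and (c) pull a neighbor into or out of $Z$ (because the defining condition for $Z$ refers to the closed neighborhood meeting $X\cup Y$). One must enumerate these local configurations — which is finite and constrained by the fact that all the preceding reduction rules have been applied exhaustively, so e.g. before \textbf{Attach2} we know $d_G(v)=3$, $d_P(v)\le 1$, and $\{u,v\}$ is not a bridge — and check the worst case of each against the numeric weights. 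I would present this as a short case analysis per rule, emphasizing that in every case the deletion either strictly shrinks the graph (removing positively-weighted vertices) or is paired with a forced \textbf{Deg1}/\textbf{Bridge}/\textbf{Deg2} step that converts some vertex into a full-weight $Y$-vertex, and that the chosen weights satisfy the handful of inequalities (such as $1-2\omega_2 \ge 0$, $\omega_2 \le \omega_1$, $\omega_3 \le 1$, $1-\omega_3-\omega_2/2 \ge 0$) that make every case nonincreasing.
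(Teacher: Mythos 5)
Your overall strategy --- a rule-by-rule case analysis of how vertices move among $U$, $X$, $Y$, $Z$, $W$, checked against the fixed weights --- is exactly the paper's, but two of your concrete claims are wrong in ways that break the argument. First, you assert that $k$ changes only in \textbf{Pending}. In fact the paper also decrements $k$ in \textbf{ConsDeg2}, and this is not optional: the contracted vertex $w$ may lie in $Z$, and since $\kappa = k - \omega_1|X| - |Y| - \omega_2|Z| - \omega_3|W|$, deleting a vertex of $Z$ \emph{removes} a subtracted term and hence \emph{increases} $\kappa$ by $\omega_2$. Your sentence ``$w$ contributed either $\omega_2$ \dots so $\kappa$ drops or stays'' has the sign backwards; without crediting $k$ by $1$ for the forcibly internal vertex $w$, the lemma fails for this rule. (Also, only $w$ disappears in \textbf{ConsDeg2}; $z$ stays and receives the new edge $\{v,z\}$.)

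Second, your step (i) claims that in the edge-adding rules no vertex ever moves from a heavier class to a lighter one. That is false for two reasons tied to the definitions of $Z$ and $W$: (a) when a vertex enters $X\cup Y$, each neighbor in $Z$ is ejected to $U$ (its closed neighborhood now meets $X\cup Y$), a change of value $+\omega_2$; (b) a vertex of $W$ whose tree-degree rises to $2$ moves to $X$, a change of value $\omega_3-\omega_1>0$ for the given weights. Both individually increase $\kappa$. The paper's proof copes with this by evaluating the \emph{combined} value of the simultaneous status changes of the one or two affected vertices (e.g.\ $\{U,Z\}\rightarrow\{X,U\}$ with value $-\omega_1+\omega_2\le 0$, or $\{U,W\}\rightarrow\{X,X\}$ with value $-2\omega_1+\omega_3\le 0$) and checking non-positivity case by case; your monotonicity shortcut does not survive these configurations, even though you correctly anticipate the analogous ripple effects for the edge-deleting rules in step (iii). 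You would need to redo (i) and (ii) as paired status-change checks, with the $k$-decrement in \textbf{ConsDeg2}, to obtain a correct proof.
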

\longversion{
\begin{proof}
 If the reduction rule \textbf{Cycle} or \textbf{Attach2} is applied to $(G,F,k)$, then an edge in $\partial E(T)$ is removed from the graph. Then,

generalization

                                                                             analyze the parameter $k$ stays the same, and either each vertex remains
in the same set among $X,Y,Z,W,U$, or one or two vertices move from $X$ to $Y$, which we denote shortly by the status change of a vertex $u$: $\{X\}
\rightarrow \{Y\}$. The value of this status change is $(-1) - (-\omega_1) \le 0$. As the value of the status change is non-positive, it does not
increase the measure. From now on, we only write down the status changes, and implicitly check that their value is non-positive.
 
 If \textbf{Bridge} is applied, then let $e=\{u,v\}$ with $u\in \partial_V E(T)$. Vertex $u$ is either in $U$ or in $X$, and $v \in U \cup Z \cup W$. If $v \in U$, then $v \in U$ after the application of \textbf{Bridge}, as $v$ is not incident to an edge of $T$ (otherwise reduction rule \textbf{Cycle} would have applied). In this case, it is sufficient to check how the status of $u$ can change, which is $\{U\} \rightarrow \{Y\}$ if $u$ has degree $2$, $\{U\} \rightarrow \{X\}$ if $d_G(u)=3$ and $d_T(u)=1$, and $\{X\} \rightarrow \{Y\}$ if $d_G(u)=3$ and $d_T(u)=2$.
 If $v \in Z$, then $v$ moves to $U$ as $u$ necessarily ends up in $X \cup Y$. The possible status changes are $\{U,Z\} \rightarrow \{Y,U\}$ if $d_G(u)=2$, $\{U,Z\} \rightarrow \{X,U\}$, if $d_G(u)=3$ and $d_T(u)=1$, and $\{X,Z\} \rightarrow \{Y,U\}$ if $d_G(u)=3$ and $d_T(u)=2$.
 If $v \in W$, $v$ ends up in $X$ or $Y$, depending on whether it is incident to one or two pt-edges. The possible status changes are then $\{U,W\} \rightarrow \{Y,X\}$, $\{U,W\} \rightarrow \{Y,Y\}$, $\{U,W\} \rightarrow \{X,X\}$, $\{U,W\} \rightarrow \{X,Y\}$, $\{X,W\} \rightarrow \{Y,X\}$, and $\{X,W\} \rightarrow \{Y,Y\}$.
 
 If \textbf{Deg1} applies, the possible status changes are $\{X\} \rightarrow \{Y\}$, $\{U\} \rightarrow \{X\}$, $\{U\} \rightarrow \{W\}$, $\{U\} \rightarrow \{Y\}$, and $\{Z\} \rightarrow \{W\}$.
 
 In \textbf{Pending}, the status change $\{W\} \rightarrow \{U\}$ has negative value, but the measure still decreases as $k$ also decreases by $1$.
 
 Similarly, in \textbf{ConsDeg2}, a vertex in $Z \cup U$ disappears, but $k$ decreases by $1$.
 
 In \textbf{Deg2}, the possible status changes are $\{U\} \rightarrow \{Y\}$, $\{U,Z\} \rightarrow \{Y,U\}$, and $\{U,W\} \rightarrow \{Y,X\}$.
 
 In \textbf{Attach}, $u$ moves from $X$ to $Y$. Thus the status change $\{X\} \rightarrow \{Y\}$.
 
 Finally, in \textbf{Special}, the possible status changes are $\{U,Z\} \rightarrow \{X,U\}$ and $\{X\} \rightarrow \{Y\}$.
\qed
\end{proof}
}

\longversion{
\begin{proof} (of Theorem~\ref{thm:rtkii}) 
Table~\ref{tab:paramCA} outlines how vertices $a$, $b$, and their neighbors move between $U$, $X$, $Y$, $Z$, and $W$ in the branches where an edge is added to $F$ or deleted from $G$ in the different cases of the algorithm. For each case, the worst branching tuple is given.

\begin{table}[htbp]
\noindent
\begin{tabular*}{\textwidth}{l c c c c}
\hline
& add & delete & branching tuple\\
\hline
\multicolumn{4}{l}{Case 4.(a), $d_G(b)=2$}\\
\hline
\multirow{4}{*}{\begin{tikzpicture}[scale=0.66]
 \tikzstyle{vertex}=[minimum size=2mm,circle,fill=black,inner sep=0mm]
 \draw (0,0) node[vertex,label=below:$a$] (a) {};
 \draw (1,0) node[vertex,label=below:$b$] (b) {};
 \draw (2,0) node[vertex,label=below:$c$] (c) {};
 \draw (-1,1) node (a1) {};
 \draw (-1,-1) node (a2) {};
 \draw (3,1) node (c1) {};
 \draw (3,-1) node (c2) {};
 \draw (a2)--(a)--(b)--(c)--(c2);
 \draw[dashed] (a1)--(a) (c)--(c1);
\end{tikzpicture}} &
$a: U\rightarrow X$ & \multirow{3}{*}{symmetric} & \multirow{3}{*}{$(1+\omega_1-\omega_2,1+ \omega_1-\omega_2)$}\\
& $b: Z\rightarrow U$ & &\\
& $c: U\rightarrow Y$ & &\\
& & &\\
\hline
\multicolumn{4}{l}{Case 4.(a), $d_G(b)=3$, $b$ is incident to a pt-edge}\\
\hline
\multirow{4}{*}{\begin{tikzpicture}[scale=0.66]
 \tikzstyle{vertex}=[minimum size=2mm,circle,fill=black,inner sep=0mm]
 \tikzstyle{ptvertex}=[minimum size=2mm,rectangle,minimum width=4mm,fill=black,inner sep=0mm]
 \draw (0,0) node[vertex,label=below:$a$] (a) {};
 \draw (1,0) node[vertex,label=below:$b$] (b) {};
 \draw (2,0) node[vertex,label=below:$c$] (c) {};
 \draw (-1,1) node (a1) {};
 \draw (-1,-1) node (a2) {};
 \draw (3,1) node (c1) {};
 \draw (3,-1) node (c2) {};
 \draw (1,1) node[ptvertex] (b1) {};
 \draw (a2)--(a)--(b)--(c)--(c2);
 \draw[dashed] (a1)--(a) (c)--(c1) (b)--(b1);
\end{tikzpicture}} &
$a: U\rightarrow X$ & \multirow{3}{*}{symmetric} & \multirow{3}{*}{$(2+\omega_1-\omega_3, 2+\omega_1-\omega_3)$}\\
&$b: W\rightarrow Y$ & &\\
&$c: U\rightarrow Y$ & &\\
& & &\\
\hline
\multicolumn{4}{l}{Case 4.(a), $d_G(b)=3$, $b$ is not incident to a pt-edge}\\
\hline
\multirow{4}{*}{\begin{tikzpicture}[scale=0.66]
 \tikzstyle{vertex}=[minimum size=2mm,circle,fill=black,inner sep=0mm]
 \tikzstyle{ptvertex}=[minimum size=2mm,rectangle,minimum width=4mm,fill=black,inner sep=0mm]
 \draw (0,0) node[vertex,label=below:$a$] (a) {};
 \draw (1,0) node[vertex,label=below:$b$] (b) {};
 \draw (2,0) node[vertex,label=below:$c$] (c) {};
 \draw (-1,1) node (a1) {};
 \draw (-1,-1) node (a2) {};
 \draw (3,1) node (c1) {};
 \draw (3,-1) node (c2) {};
 \draw (1,1) node (b1) {};
 \draw (a2)--(a)--(b)--(c)--(c2) (b)--(b1);
 \draw[dashed] (a1)--(a) (c)--(c1);
\end{tikzpicture}} &
$a: U\rightarrow X$ & $a:U\rightarrow Y$ & \multirow{3}{*}{$(2+\omega_1, 1+\omega_2)$}\\
&$b: U\rightarrow Y$ & $b:U\rightarrow Z$ &\\
&$c: U\rightarrow Y$ & &\\
& & &\\
\hline
\multicolumn{4}{l}{Case 4.(b), $d_T(a)=1$}\\
\hline
\multirow{4}{*}{\begin{tikzpicture}[scale=0.66]
 \tikzstyle{vertex}=[minimum size=2mm,circle,fill=black,inner sep=0mm]
 \draw (0,0) node[vertex,label=below:$a$] (a) {};
 \draw (1,0) node[vertex,label=below:$b$] (b) {};
 \draw (2,0) node[vertex,label=below:$c$] (c) {};
 \draw (-1,1) node (a1) {};
 \draw (-1,-1) node (a2) {};
 \draw (3,1) node (c1) {};
 \draw (3,-1) node (c2) {};
 \draw (a2)--(a)--(b)--(c)--(c2) (c)--(c1);
 \draw[dashed] (a1)--(a);
\end{tikzpicture}} &
$a: U\rightarrow X$ & $a:U\rightarrow Y$ & \multirow{3}{*}{$(1+\omega_1-\omega_2, 1+\omega_3-\omega_2)$}\\
&$b: Z\rightarrow Y$ & $b:Z\rightarrow U$ &\\
& & $c:U\rightarrow W$ &\\
& & &\\
\hline
\multicolumn{4}{l}{Case 4.(b), $d_T(a)=2$}\\
\hline
\multirow{4}{*}{\begin{tikzpicture}[scale=0.66]
 \tikzstyle{vertex}=[minimum size=2mm,circle,fill=black,inner sep=0mm]
 \draw (0,0) node[vertex,label=below:$a$] (a) {};
 \draw (1,0) node[vertex,label=below:$b$] (b) {};
 \draw (2,0) node[vertex,label=below:$c$] (c) {};
 \draw (-1,1) node (a1) {};
 \draw (-1,-1) node (a2) {};
 \draw (3,1) node (c1) {};
 \draw (3,-1) node (c2) {};
 \draw (a)--(b)--(c)--(c2) (c)--(c1);
 \draw[dashed] (a1)--(a)--(a2);
\end{tikzpicture}} &
$a: X\rightarrow Y$ & $a:X\rightarrow Y$ & \multirow{3}{*}{$(2-\omega_1-\omega_2, 1-\omega_1-\omega_2+\omega_3)$}\\
&$b: Z\rightarrow Y$ & $b:Z\rightarrow U$ &\\
& & $c:U\rightarrow W$ &\\
& & &\\
\hline
\multicolumn{4}{l}{Case 4.(c)}\\
\hline
\multirow{4}{*}{\begin{tikzpicture}[scale=0.66]
 \tikzstyle{vertex}=[minimum size=2mm,circle,fill=black,inner sep=0mm]
 \tikzstyle{ptvertex}=[minimum size=2mm,rectangle,minimum width=4mm,fill=black,inner sep=0mm]
 \draw (0,0) node[vertex,label=below:$a$] (a) {};
 \draw (1,0) node[vertex,label=below:$b$] (b) {};
 \draw (2,0) node[vertex,label=below:$c$] (c) {};
 \draw (-1,1) node (a1) {};
 \draw (-1,-1) node (a2) {};
 \draw (3,1) node (c1) {};
 \draw (3,-1) node (c2) {};
 \draw (1,1) node[ptvertex] (b1) {};
 \draw (a2)--(a)--(b)--(c)--(c2) (c)--(c1);
 \draw[dashed] (a1)--(a) (b)--(b1);
\end{tikzpicture}} &
$a: U\rightarrow X$ & $a:U\rightarrow Y$ & \multirow{3}{*}{$(2\omega_1-\omega_3, 2)$}\\
&$b: W\rightarrow X$ & $b:W\rightarrow Y$ &\\
& & $c: U\rightarrow W$ &\\
& & &\\
\hline
\multicolumn{4}{l}{Case 4.(d)}\\
\hline
\multirow{4}{*}{\begin{tikzpicture}[scale=0.66]
 \tikzstyle{vertex}=[minimum size=2mm,circle,fill=black,inner sep=0mm]
 \tikzstyle{ptvertex}=[minimum size=2mm,rectangle,minimum width=4mm,fill=black,inner sep=0mm]
 \draw (0,0) node[vertex,label=below:$a$] (a) {};
 \draw (1,0) node[vertex,label=below:$b$] (b) {};
 \draw (2,0) node[vertex,label=below:$c$] (c) {};
 \draw (-1,1) node (a1) {};
 \draw (-1,-1) node (a2) {};
 \draw (3,1) node (c1) {};
 \draw (3,-1) node (c2) {};
 \draw (1,1) node (b1) {};
 \draw (a2)--(a)--(b)--(c)--(c1) (b)--(b1);
 \draw[dashed] (a1)--(a);
 \draw[black!30] (c)--(c2);
\end{tikzpicture}} &
$a: U\rightarrow X$ & $a:U\rightarrow Y$ & \multirow{3}{*}{$(\omega_1, 1+\omega_2)$}\\
& & $b:U\rightarrow Z$ &\\
& & &\\
& & &\\

\hline
\multicolumn{4}{l}{Case 5, $d_G(x)=d_G(c)=3$ and there is  $q \in (X\cap  (N(x) \cup N(c))$, w.l.o.g. $q\in N(c)$ }\\
\hline
\multirow{4}{*}{\begin{tikzpicture}[scale=0.66]
 \tikzstyle{vertex}=[minimum size=2mm,circle,fill=black,inner sep=0mm]
 \tikzstyle{ptvertex}=[minimum size=2mm,rectangle,minimum width=4mm,fill=black,inner sep=0mm]
 \draw (0,0) node[vertex,label=below:$a$] (a) {};
 \draw (1,0) node[vertex,label=below:$b$] (b) {};
 \draw (2,0) node[vertex,label=below:$c$] (c) {};
 \draw (-1,1) node (a1) {};
 \draw (-1,-1) node (a2) {};
 \draw (3,1) node (c1) {};
 \draw (3,-1) node (c2) {};
 \draw (1,1) node[vertex,label=right:$x$] (b1) {};
 \draw (a)--(b)--(c)--(c1) (b)--(b1);
 \draw[dashed] (a1)--(a)--(a2);
 \draw[black!30] (c)--(c2);
\end{tikzpicture}} &
$a: X\rightarrow Y$ &$a:X\rightarrow Y$ &\multirow{3}{*}{$(2-\omega_1,3-2\omega_1,1-\omega_1+\omega_2)$}\ \\
&$b: U\rightarrow Y$  & $b:U\rightarrow Z$ &\\
& (2nd branch) & &\\
& $q:X \rightarrow Y$&\\

\hline
\multicolumn{4}{l}{Case 5, $d_G(x)=d_G(c)=3$}\\
\hline
\multirow{4}{*}{\begin{tikzpicture}[scale=0.66]
 \tikzstyle{vertex}=[minimum size=2mm,circle,fill=black,inner sep=0mm]
 \tikzstyle{ptvertex}=[minimum size=2mm,rectangle,minimum width=4mm,fill=black,inner sep=0mm]
 \draw (0,0) node[vertex,label=below:$a$] (a) {};
 \draw (1,0) node[vertex,label=below:$b$] (b) {};
 \draw (2,0) node[vertex,label=below:$c$] (c) {};
 \draw (-1,1) node (a1) {};
 \draw (-1,-1) node (a2) {};
 \draw (3,1) node (c1) {};
 \draw (3,-1) node (c2) {};
 \draw (1,1) node[vertex,label=right:$x$] (b1) {};
 \draw (a)--(b)--(c)--(c1) (b)--(b1);
 \draw[dashed] (a1)--(a)--(a2);
 \draw[black!30] (c)--(c2);
\end{tikzpicture}} &
$a: X\rightarrow Y$ & $a:X\rightarrow Y$ & \multirow{3}{*}{$~(1-\omega_1+\omega_2, 2-\omega_1+\omega_2,2-\omega_1+\omega_2)~$}\\
&$b: U\rightarrow Y$ & $b:U\rightarrow Z$ &\\
&$c/x: U\rightarrow Z$ & &\\
& \multicolumn{3}{l}{There are $3$ branches; $2$ of them (add) are symmetric.}\\

\hline
\multicolumn{4}{l}{Case 5, $d_G(x)=2$ or $d_G(c)=2$ and }\\
\hline
\multirow{4}{*}{\begin{tikzpicture}[scale=0.66]
 \tikzstyle{vertex}=[minimum size=2mm,circle,fill=black,inner sep=0mm]
 \tikzstyle{ptvertex}=[minimum size=2mm,rectangle,minimum width=4mm,fill=black,inner sep=0mm]
 \draw (0,0) node[vertex,label=below:$a$] (a) {};
 \draw (1,0) node[vertex,label=below:$b$] (b) {};
 \draw (2,0) node[vertex,label=below:$c$] (c) {};
 \draw (-1,1) node (a1) {};
 \draw (-1,-1) node (a2) {};
 \draw (3,1) node (c1) {};
 \draw (3,-1) node (c2) {};
 \draw (1,1) node[vertex,label=right:$x$] (b1) {};
 \draw (a)--(b)--(c)--(c1) (b)--(b1);
 \draw[dashed] (a1)--(a)--(a2);
 \draw (c)--(c2);
\end{tikzpicture}} &
$a: X\rightarrow Y$ & $a:X\rightarrow Y$ & \multirow{3}{*}{$~(2-\omega_1, 2-\omega_1,2-\omega_1)~$}\\
&$b: U\rightarrow Y$ & $b:U\rightarrow Z$ &\\
& \multicolumn{3}{l}{When $\{a,b\}$ is deleted, \textbf{ConsDeg2} additionally decreases $k$ by $1$}\\
& \multicolumn{3}{l}{and removes a vertex of $Z$.}\\
\hline
\end{tabular*}
\caption{\label{tab:paramCA}Analysis of the branching for the running time of Theorem~\ref{thm:rtkii}}
\end{table}
 
The tight branching numbers are found for cases 4.(b) with $d_T(a)=2$, 4.(c), 4.(d), and 5. with all of $b$'s neighbors having degree $3$. The respective branching numbers are $(2-\omega_1-\omega_2,1-\omega_1-\omega_2+\omega_3)$, $(2\omega_1-\omega_3,2)$, $(\omega_1,1+\omega_2)$, and $(1-\omega_1+\omega_2,2-\omega_1+\omega_2,2-\omega_1+\omega_2)$. They all equal $\rtkii$.
\qed
\end{proof}
}

\section{Conclusion \& Future Research}
%\subsubsection{Exact Cover by Paths}
%\begin{proposition}
%There is a spanning Tree $T$ with at least $k+1$ leaves $\iff$ There is a path cover  of size $k$. 
%\end{proposition}
%\subsubsection{R\'esum\'e}
We have shown that {\sc Max Internal Spanning Tree} can be solved in time $\Oh^*(3^n)$.
In a preliminary version of this paper we asked if \mist can be solved in time $\Oh^*(2^n)$ and also expressed our interest in polynomial space algorithms for \mist. These questions have been settled very recently by Nederlof~\cite{Nederlof09} by providing a $\Oh^*(2^n)$ polynomial-space algorithm for \mist which is based on the principle of Inclusion-Exclusion and on a new concept called ``branching walks''.
% It is of interest if the running time can be improved such that it meets the best running time for {\sc Hamiltonian Path}, namely $\Oh^*(2^n)$. %A further interesting question is if we can get a running time of the form $\Oh^*(c^n)$ by only allowing polynomial space.
%Polynomial space algorithms with such running times would also be very interesting.
%Already some time ago such improvements have been made for {\sc TSP} and {\sc Hamiltonian Path}. Namely, S.~Kohn, A.~Gottlieb and  M.~Kohn~\cite{KohGotKoh77} gave an algorithm based on generating functions with a running time of $\Oh(2^n n^3)$ and space requirements of $\Oh(n^2)$ in 1977. In 1982 R.~M.~Karp~\cite{Kar77} came up with an algorithm which improved storage requirements to $\Oh(n)$ and preserved this run time by an inclusion-exclusion approach.

This paper focuses on algorithms for \mist that work for the degree-bounded case, in particular,
for subcubic graphs. The main novelty is a Measure \& Conquer approach
to analyse our algorithm from a parameterized perspective (parameterizing by the solution size). We are not aware of many examples 
where this was successfully done without cashing the obtained gain at an early stage, see~\cite{Wah2007}. More examples in
this direction would be interesting to see. 
\longversion{Further improvements on the running times of our algorithms pose another natural challenge.}

%Improvements of the running time of $\Oh(\rt^n)$ for maximum degree three graphs can also be considered as a future research area.\\
A closely related problem worth investigating is the generalisation to directed graphs: Find a directed tree, which consist of directed paths form the root to the leaves with as few leaves as possible. Which results can be carried over to the directed case?\\[1ex]
\textbf{Acknowledgment} We would like to thank Alexey A. Stepanov for useful discussions in the initial phase of this paper.

%\bibliographystyle{plain}
%\bibliography{hen2,../zukopieren}

\newpage

\appendix

\end{document}